\newtheorem{thm}{Theorem}
\newtheorem{lem}{Lemma}
\newcommand{\argmin}{\mathop{\rm arg\min}}
\newcommand{\Deq}{\overset{\mathcal{D}}{=}}
\newcommand{\Darrow}{\overset{\mathcal{D}}{\longrightarrow}}
\newcommand{\Parrow}{\overset{\mathbb{P}}{\longrightarrow}}
\newcommand{\Cov}{\operatorname{Cov}}
\newcommand{\Var}{\operatorname{Var}}
\newcommand{\SKK}{\mathrm{SKK}}
\newcommand{\IHS}{\mathrm{IHS}}
\newcommand{\Samsee}{\mathrm{SAMSEE}}
\newcommand{\ISE}{\mathrm{ISE}}
\newcommand{\iid}{\overset{i.i.d.}{\sim}}
\begin{document}

\title{{\huge Threshold Selection\\ in Univariate Extreme Value Analysis}}
\author[1]{Laura Fee Schneider\thanks{laura-fee.schneider@mathematik.uni-goettingen.de}}
\author[1]{Andrea Krajina\thanks{akrajina@gmail.com}}
\author[1]{Tatyana Krivobokova\thanks{tkrivob@gwdg.de}}
\affil[1]{Institute for Mathematical Stochastics, University of G\"{o}ttingen}
\date{ }
\maketitle

\vspace*{-1cm}

\begin{abstract}
Threshold selection plays a key role for various aspects of statistical inference of rare events. Most classical approaches tackling this problem for heavy-tailed distributions crucially depend on tuning parameters or critical values to be chosen by the practitioner. To simplify the use of automated, data-driven threshold selection methods, we introduce two new procedures not requiring the manual choice of any parameters. The first method measures the deviation of the log-spacings from the exponential distribution and achieves good performance in simulations for estimating high quantiles. The second approach smoothly estimates the asymptotic mean square error of the Hill estimator and performs consistently well over a wide range of distributions.\\
The methods are compared to existing procedures in an extensive simulation study and applied to a dataset of financial losses, where the underlying extreme value index is assumed to vary over time. This application strongly emphasizes the importance of solid automated threshold selection.
\end{abstract}

\paragraph{AMS 2010 Subject Classification:}
Primary 62G32; secondary 62G05, 62F12, 97M30

\paragraph{Keywords:} Extreme value statistics, peak-over-threshold approach, power laws, Hill estimator, tuning parameter selection, bias estimation

\section{Introduction}

Extreme value analysis of heavy-tailed distributions is an important model in various applications. In seismology and climatology, for example, statistics of extremes is used to study earthquakes \citep{Beirlant-earthquake} or heavy precipitation \citep{Naveau-rain}. Another important field of research is analysing high financial losses, which becomes particularly interesting if the losses depend on covariates \citep{Valerie-loss, Hambuckers}. In this situation an automated threshold selection procedure could bring additional benefits by enabling the selection of the threshold depending on a covariate. We will discuss this possibility in more detail in Section \ref{S:app}.\\
To mathematically investigate the behaviour of heavy tails, we consider random variables from the domain of attraction (DoA) of a Fr\'echet distribution. Let $X_1,\dots,X_n $ be independent identically distributed (i.i.d.) random variables with distribution function $F$, where $F$ is in the DoA of an extreme value distribution (evd)
$G_{\gamma}$ with extreme value index $\gamma>0$. This means there exist sequences $a_n>0$ and $b_n$ real, s.t.
\begin{align*}
\lim_{n\rightarrow\infty} F^n(a_n x+b_n) = G_{\gamma}(x) : = \exp\left( -x^{-1/\gamma}\right).
\end{align*}
In this situation the following first order condition holds,
\begin{align}\label{E:first order condition}
  \underset{t\rightarrow\infty}{\lim}\ \frac{1-F(tx)}{1-F(t)}=x^{-1/\gamma},
\end{align}
i.e.\ the survival function $1-F$ is regularly varying with index $-1/\gamma$. Distributions fulfilling this condition are called Pareto-type distributions, because they only differ from the Pareto distribution by a slowly varying function $\ell_F(x)$, i.e.\ $1-F(x)=x^{-1/\gamma}\ell_F(x)$.\\
We can interpret the quotient in (\ref{E:first order condition}) as a conditional probability, and it follows directly that
\begin{align}
 \frac{X_1}{t}\,\Big|\, X_1>t\ \overset{\mathcal{D}}{\longrightarrow}\ & P, \text{ as } t\rightarrow\infty \text{ and } P\sim \mathrm{Pareto}\left(1,\frac{1}{\gamma}\right), \nonumber \\
 \log\left(\frac{X_1}{t}\right)\Big|\,X_1>t\ \overset{\mathcal{D}}{\longrightarrow}\ & E, \text{ as } t\rightarrow\infty \text{ and } E\sim \mathrm{Exp}\left(\frac{1}{\gamma}\right). \label{E:PoT}
\end{align}
Thus, for a sufficiently large threshold $t$ the data above this threshold can be modelled by a Pareto or an exponential distribution. In this article we concentrate on the exponential approximation and utilize it for inference on the extreme value index. It is common to consider the threshold $t=X_{(n-k,n)}$ and choose the sample fraction $k$ instead of $t$, where $X_{(1,n)}\leq\cdots\leq X_{(n,n)}$ denote the order statistics of a sample of size $n$. In this case, a natural estimator for $\gamma$ under the exponential approximation of the log-spacings $Y_{(i,k)}:=\log(X_{(n-i+1,n)})-\log(X_{(n-k,n)})$ is their mean, the Hill estimator \citep{Hill},
\begin{equation}\label{E:Hill}
\hat{\gamma}_k := \frac{1}{k}\sum_{i=1}^k \log\left( \frac{X_{(n-i+1,n)}}{X_{(n-k,n)}} \right) = \frac{1}{k}\sum_{i=1}^k Y_{(i,k)}. 
\end{equation}
The Hill estimator is still among the most popular and well-known estimators for the extreme value index, although its sample path as a function in $k$ can be highly unstable and estimation therefore crucially depends on the choice of the sample fraction $k$.
This dependence highlights the difficulties in estimating $\gamma$: even from univariate i.i.d.\ observations from $F\in\mathrm{DoA}(G_{\gamma})$, estimation is hard, since only few observations contain information about the extreme value distribution $G_{\gamma}$. To select a threshold above which the data can be used for statistical inference about the tail is one of the most fundamental problems in the field of extreme value analysis.\\

Due to the importance of this task, the appropriate choice of the threshold has been discussed extensively in extreme value research over the last decades, and suggested solutions cover a variety of methodologies. We give a short summary on different types of approaches and stress the specific difficulties that arise. We mainly concentrate on methods we compare in our simulation study in Section \ref{S:Simulation}. More comprehensive reviews about threshold selection can be found in \cite{review} and \cite{DeyYan}.\\
One basic concept in threshold selection is data visualisation, which is also discussed more deeply in \cite{KratzResnick} and \cite{DreesHill}. Popular graphical diagnostics used in this context are the Zipf plot, Hill plot, QQ-plot or the mean-excess plot to name a few. A major drawback of these methods is their subjectivity due to the necessarily personal interpretation of the plot. Further, it is a burden to choose each threshold manually, especially in high dimensional settings or when analysing many samples. 
Easier ways to select the sample fraction are rules-of-thumb such as using the upper 10\% of the data \citep{DuMouchel} or $k=\sqrt{n}$ \citep{Ferreira-quantiles}. However, these suggestions are neither theoretically justified nor data driven. 
\cite{Reiss-Thomas} present a procedure that tries to find a region of stability among the estimates of the extreme value index. Their method depends on a tuning parameter, whose choice is further analysed in \cite{Neves-RT}. To our knowledge no theoretical analysis exists for this approach.\\
Besides these and similar heuristic approaches, there is a class of theoretically motivated procedures that target the optimal sample fraction for specific estimation tasks, such as quantile estimates \citep{Ferreira-quantiles}, estimation of high probabilities \citep{HallWeissman} or the Hill estimator, see below. We also mention two other methodologies. First, there are suggestions that utilize comparing the empirical distribution to the fitted generalized Pareto distribution (GPD) via goodness-of-fit tests \citep{Bader} or by minimizing the distance between them \citep{Pickands, GonzaloOlmo, Clauset}, where the latter approach is theoretically analysed by \cite{Drees_Clauset}. Further, \cite{Goegebeur} propose a family of kernel statistics to test for exponentiality in order to select a threshold. \\
Of particular interest to us are methods that aim to estimate the sample fraction $k_{\mathrm{opt}}$ which minimizes the asymptotic mean square error (AMSE) of the Hill estimator. To construct an estimator for $k_{\mathrm{opt}}$, \cite{DreesK} utilize the Lepskii method and an upper bound on the maximum random fluctuation of $\hat{\gamma}_k$ around $\gamma$. To apply their approach it is necessary to choose several tuning parameters and to obtain consistent initial estimates for $\gamma$ and a second order parameter $\rho$. They recommend specific choices of the parameters based on a numerical study and we employ their proposals in our simulations. However, the choice of these parameters is not data-driven.
In \cite{Guillou}, a test statistic $Q_k$ is constructed based on an accumulation of log-spacings, which takes values around 1 as long as the bias of the Hill estimator is not significantly large. Their statistic depends on a tuning parameter as well, and a critical value to test $Q_k$ against has to be chosen. Again we adopt the parameter choice suggested in their simulation study.
\cite{Danielsson} introduce a double bootstrap approach to estimate the optimal sample fraction. They need to choose the number of bootstrap samples and a parameter $n_1$. For $n_1$, a data-driven but computationally expensive selection method is provided, where the whole bootstrap procedure is repeated for various possible values of $n_1$.
Another estimator for $k_{\mathrm{opt}}$ is given by \cite{Beirlant02}, which employs least squares estimates from an exponential regression approach. The method depends on an estimate for $\rho$ and a sample fraction $k_0$. To avoid the choice of $k_0$ they suggest taking the median of the estimates over a range of values, e.g.\ $k_0\in\{3,\dots,n/2\}$.
A different approach is taken by \cite{Goegebeur}, who use the properties of a test statistic regarding bias estimation to construct an estimator for the AMSE$/\gamma$ and minimize it with respect to $k$. If one fixes $\rho=-1$, as they suggest in their simulations chapter, there is no further tuning parameter to be chosen. However, no result about consistency of $\hat{k}$ in the sense of $\hat{k}/k_{\mathrm{opt}}\overset{\mathbb{P}}{\rightarrow} 1$ is known in contrast to the approaches in \cite{DreesK}, \cite{Guillou}, \cite{Danielsson} and \cite{Beirlant02}.\\

In this paper we contribute to the problem of threshold selection by introducing two new methods. The first one presented in Section \ref{S:SKK} is inspired by the idea of testing the exponential approximation. We estimate the integrated square error (ISE) of the exponential density under the assumption that the log-spacings are indeed exponentially distributed. The error functional we obtain, denoted as inverse Hill statistic (IHS), is very easy to compute and does not depend on any tuning parameters. Since this criterion is variable for small $k$, it can be additionally smoothed to improve the performance. The minimizing sample fraction of IHS is asymptotically smaller than $k_{\mathrm{opt}}$, as it is stricter against deviation from the exponential approximation. This estimator performs remarkably well for adaptive quantile estimation on finite samples, as illustrated in our simulation study.\\
In our second approach we suggest a smooth estimator for the AMSE of the Hill estimator, called SAMSEE (smooth AMSE estimator). This estimator is constructed by a preliminary estimate of $\gamma$ using the generalized Jackknife approach in \cite{Gomes01} and a bias estimator for the Hill estimator introduced in Section \ref{S:estimateMSE}. By minimizing SAMSEE we estimate the optimal sample fraction $k_{\mathrm{opt}}$. For estimation, the choice of a large sample fraction $K$ is necessary, for which we present a data-driven selection procedure in Section \ref{S:estimateMSE}. 
SAMSEE utilizes the idea of fixing $\rho=-1$, which is justified by good performance in simulations and leads to a simpler and more robust estimator. However, the estimator can also be adjusted to any $\rho$ by including a consistent estimator $\hat{\rho}$, as described in Section \ref{S:rho not -1}.\\
After introducing our two novel threshold selection methods in Sections \ref{S:SKK} and \ref{S:estimateMSE} we compare these methods to various other approaches in an numerical analysis in Section \ref{S:Simulation}. In Section \ref{S:app} the importance of automated threshold selection procedures is illustrated in an application, where we  non-parametrically estimate an extreme value index that varies over time.
The proof of Theorem \ref{T:Bias}, which describes the asymptotic behaviour of our bias estimator, and auxiliary theoretical results can be found in Appendix \ref{S:proofs}.

\section{IHS -- The inverse Hill statistic}\label{S:SKK}

In this section we introduce the first threshold selection procedure by analysing the integrated square error (ISE) between the exponential density $h_{\gamma}$ and its parametric estimator $h_{\hat{\gamma}_k}$ employing the Hill estimator,
\begin{equation}
\ISE(k):= \int \left(h_{\gamma}(x) -h_{\hat{\gamma}_k}(x)\right)^2\mathrm{d}x
=\frac{1}{2\gamma}-\frac{2}{\gamma +\hat{\gamma}_k} + \frac{1}{2\hat{\gamma}_k} . \nonumber
\end{equation} 
The first term of ISE is constant and thus plays no role for selecting $k$. The last term of ISE is known, but the second term is not.
Therefore, we cannot minimize ISE directly. Instead, we want to estimate and minimize its expectation under the exponential approximation. This is based on the idea of considering the hypothesis $H_0$ that the log-spacings $Y_{(i,k)}$ are indeed exponentially distributed. Under $H_0$ the Hill estimator is gamma distributed, see Lemma \ref{L:Hill-dist}, and the mean of ISE (MISE) can be calculated explicitly. We observe that MISE is a decreasing function in $k$  under the exponential approximation,
\begin{equation}\label{E:MISE_terms}
\mathrm{MISE}(k)-\frac{1}{2\gamma}:= \mathbb{E}_{H_0}[\ISE(k)]-\frac{1}{2\gamma} =-\frac{1}{\gamma}C(k)+\frac{k}{2(k-1)\gamma} ,
\end{equation}
where $C(k):=2\exp(k) k^k \Gamma(1-k,k)$ and $\Gamma(a,b)$ denotes the upper incomplete gamma function. The function $C(k)$ converges to 1 very fast, s.t.\ we obtain 
\begin{equation}
\mathbb{E}_{H_0}\left[ \frac{2}{\gamma+\hat{\gamma}_k}\right] \approx \frac{1}{\gamma} =  \mathbb{E}_{H_0}\left[\frac{k-1}{k\hat{\gamma}_k}\right] .\nonumber
\end{equation} 
This provides us with an unbiased estimator for the first term in (\ref{E:MISE_terms}) under $H_0$. However, due to the high variability for small $k$, we instead want to find an estimator of the form $w/\hat{\gamma}_k$ for some $w$ depending on $k$ that minimizes the MSE under the exponential approximation. To do so, we approximate its MSE in the following way,
\begin{align}\label{E:MSE_MISE}
\mathbb{E}_{H_0}\left[\left(\frac{w}{\hat{\gamma}_k}-\frac{2}{\hat{\gamma}_k+\gamma}\right)^2\right] \approx \frac{w^2k^2}{\gamma^2(k-1)(k-2)} -\frac{2wk}{\gamma^2(k-1)} + \frac{1}{\gamma^2}.
\end{align}
The approximation depends on similar functions as $C(k)$, which quickly become constant. The MSE in (\ref{E:MSE_MISE}) is minimized for $w=(k-2)/k$. Thus, we suggest the inverse Hill statistic
\begin{align*}
\IHS(k) := \frac{1}{2\hat{\gamma}_k} -\frac{k-2}{\hat{\gamma}_k k} = \frac{4-k}{2\hat{\gamma}_k k} 
\end{align*}
to estimate $\mathrm{MISE}(k) -(2\gamma)^{-1}$ and the threshold selected via minimizing IHS,
\begin{align*}
\hat{k}_{\IHS} := \argmin_{1<k<n}\ \IHS(k).
\end{align*}
By minimizing IHS we select a sample fraction where IHS starts increasing and contradicts $H_0$ by behaving contrarily to MISE under the exponential approximation. This criterion can be compared to hypothesis testing with a large significance level $\alpha$, which implies seeking high confidence when deciding to not reject $H_0$. Further properties of $\hat{k}_{\IHS}$ are analysed theoretically in Section \ref{S:SKK-theory} and for finite samples in a numerical study in Section \ref{S:Simulation}. \\
Note that the performance of IHS depends on the bias of the Hill estimator being positive and increasing, see Section \ref{S:SKK-theory}. However, the bias can be negative for some non-standard distributions. In case of a negative bias, we instead suggest to use,
\begin{align*}
\IHS^{-}(k) := \frac{4+k}{2\hat{\gamma}_k k}\quad \text{and} \quad
\hat{k}_{\IHS^{-}} := \argmin_{1<k<n}\ \IHS^{-}(k).
\end{align*}
The two cases can easily be distinguished by analysis of the Hill estimator for large $k$. Both $\IHS$ and $\IHS^{-}$ are justified by asymptotic results in Section \ref{S:SKK-theory}.\\
Figure \ref{F:Hill-SKK} illustrates that IHS is highly varying for small $k$, which makes automatic threshold choices more variable. To control this problematic behaviour we smooth the IHS. More specifically, we want to estimate $\mathbb{E}[\IHS]$ by considering the regression problem 
\begin{align*}
\IHS(k) = \mathbb{E}[\IHS](k) + \sigma\epsilon_k,\ \ k=1,\dots,n,
\end{align*} 
where $\sigma>0$ and $\mathbb{E}[\epsilon_k]=0$. 
Due to the structure of the Hill estimator, the random variables $\epsilon_k$ are highly dependent, which needs to be taken into account in estimation. In our simulations, we apply a Bayesian non-parametric procedure introduced by \citet{SerraKR} which simultaneously estimates mean and covariance and is available in the R-package {\it eBsc}. The approach provides a smooth estimator for the expectation of IHS -- denoted as sIHS -- comprising less variation for small $k$. This way we can improve the performance by selecting a more suitable threshold, as illustrated in Figure \ref{F:Hill-SKK}. Of course, one can also use other smoothing procedures suitable for dependent data \citep{Opsomer, Krivobokova07, Lee10}.\\
\begin{figure}[h]
\centering
\includegraphics[width=0.48\textwidth, height=0.28\textheight]{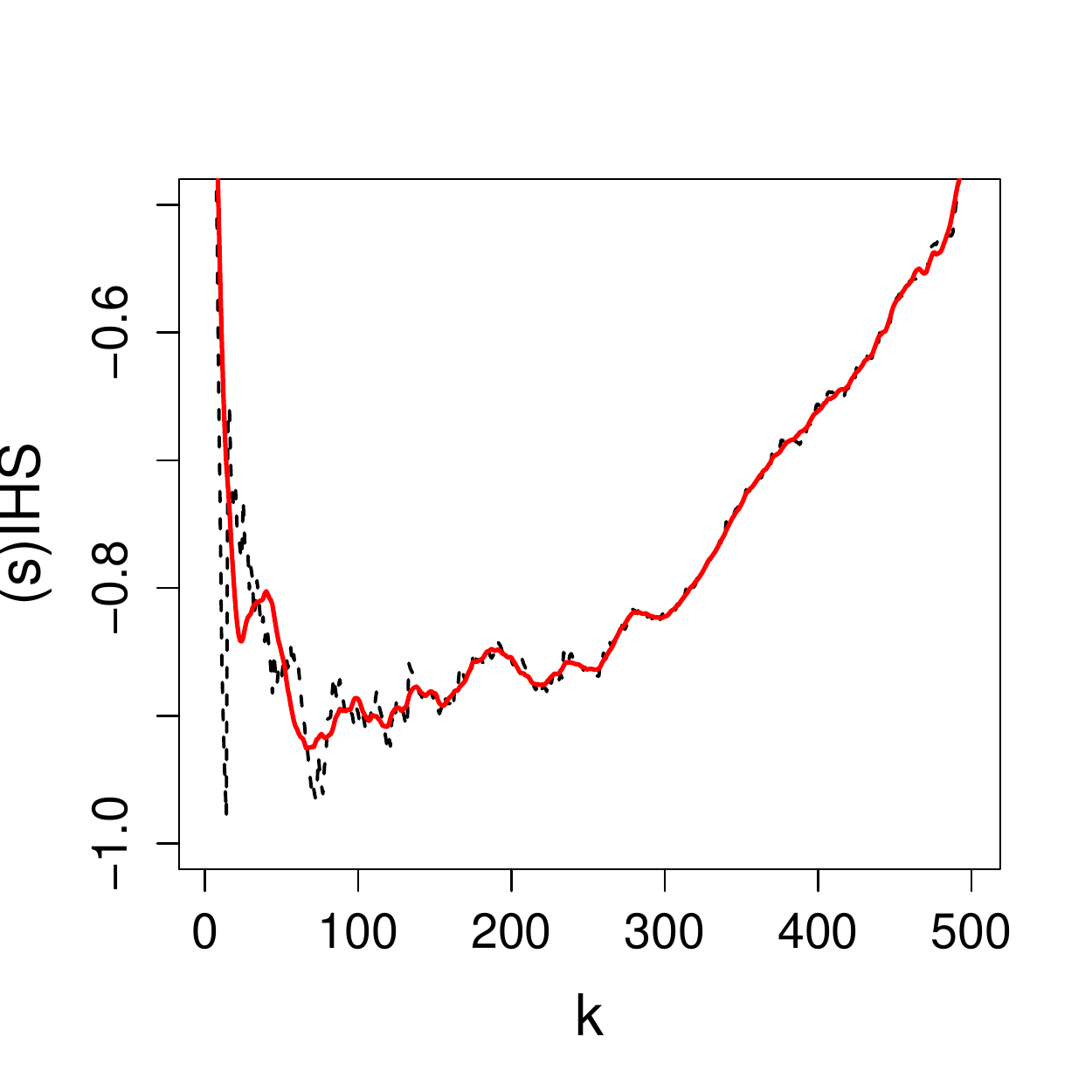} \quad 
\includegraphics[width=0.48\textwidth, height=0.28\textheight]{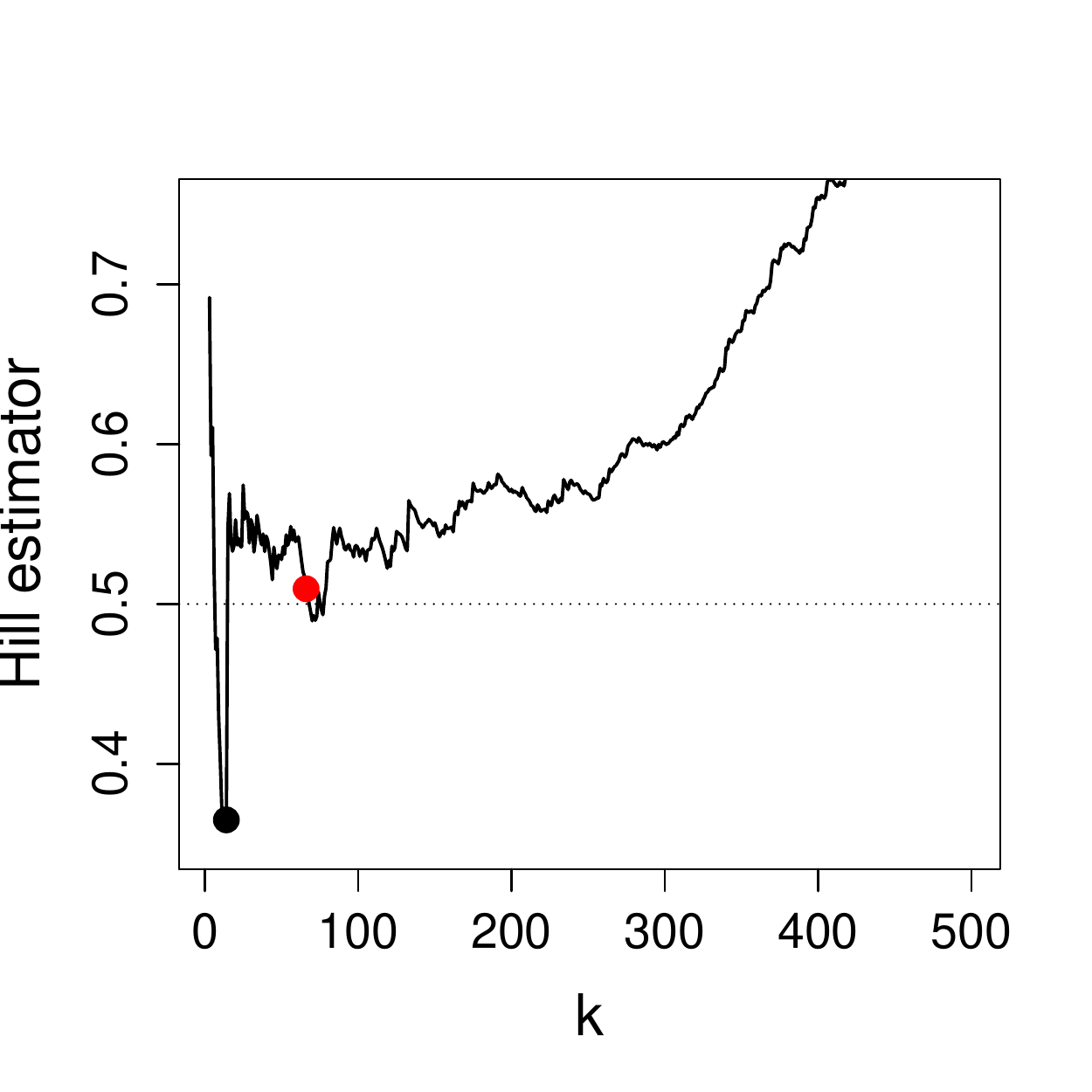}
\caption{On the left, the IHS (dashed) and sIHS (red) are plotted for a Fr\'echet(2) sample of size $n=500$. On the right the Hill plot for the same sample with the minimizing $k$ of IHS (black) and sIHS (red) is shown, where the dotted line marks the true value of $\gamma=1/2$. }
\label{F:Hill-SKK}
\end{figure}

We finally want to remark on the relation between IHS and ISE, which is given by
\begin{equation}\label{E:SKKvsISE}
\IHS +\frac{1}{2\gamma}= \ISE  +\frac{2}{k\hat{\gamma}_k} +\frac{\hat{\gamma}_k- \gamma}{\hat{\gamma}_k(\hat{\gamma}_k+\gamma)}.
\end{equation}
This equation points out that minimizing IHS does not minimize ISE, as IHS takes an additional bias term into account. If the bias of the Hill estimator is positive, IHS selects smaller $k$ (larger thresholds) than ISE. This is not surprising, because we estimate the expectation of the ISE under the hypothesis that the exponential approximation holds. This is a much more conservative error functional, meaning it is more strict against deviation from the exponential distribution.\\
In conclusion, with IHS we do not aim to estimate $k_{\mathrm{opt}}$ but to find a sample fraction where we can be very certain that the exponential approximation still holds. The impact of this consideration is illustrated in simulations and an application in Sections \ref{S:Simulation} and \ref{S:app}.

\subsection{Theorectical analysis of IHS}\label{S:SKK-theory}

In order to understand the IHS asymptotically we consider the second order condition,
\begin{equation}\label{E:second order condition}
\underset{t\rightarrow\infty}{\lim} \frac{\frac{U(tx)}{U(t)}-x^{\gamma}}{A(t)} =x^{\gamma}\frac{x^{\rho}-1}{\rho},
\end{equation}
for $x>0$ and with second order parameter $\rho<0$. Here, $A(t)$ denotes a function converging to zero as $t$ goes to infinity and $|A|$ is regularly varying with index $\rho$. Further, $U$ is defined by $U(x):=F^{\leftharpoonup}\!\left(1-\frac{1}{x}\right)$, where $F^{\leftharpoonup}$ denotes the left inverse of the distribution function $F$. In this setting the following asymptotic normality statements for the Hill estimator $\hat{\gamma}_k$ hold.

\begin{thm}[Theorem 3.2.5 in \cite{deHaan-book}]\label{T:asymp_Hill}
Let $X_1,\dots,X_n$ be i.i.d.\ random variables with distribution function $F\in \mathrm{DoA}(G_{\gamma})$ for $\gamma>0$. If (\ref{E:second order condition}) holds and $k$ is an intermediate sequence, i.e.\ $k\rightarrow\infty$ and $k/n\rightarrow0$ as $n\rightarrow\infty$, then
\begin{align*}
\sqrt{k}(\hat{\gamma}_{k} - \gamma) \overset{\mathcal{D}}{\longrightarrow}\  \mathcal{N}\left(\frac{\lambda}{(1-\rho)}, \gamma^2\right),
\end{align*}
with $\lambda:= \underset{k\rightarrow\infty}{\lim} \sqrt{k}A(n/k)$. 
\end{thm}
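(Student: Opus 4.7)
The plan is to reduce the statement to a central limit theorem for exponential random variables plus a deterministic bias, by exploiting the classical probabilistic representation of upper order statistics together with the second order condition.

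First I would use the representation
\begin{equation*}
\bigl(X_{(n-i+1,n)}\bigr)_{i=1,\dots,k} \;\Deq\; \bigl(U(Y_{(n-i+1,n)})\bigr)_{i=1,\dots,k},
\end{equation*}
where $Y_1,\dots,Y_n$ are i.i.d.\ standard Pareto and $U=F^{\leftharpoonup}(1-1/\cdot)$. Writing $Y_{(n-k,n)}=:T_k$, the Rényi representation for Pareto order statistics yields $Y_{(n-i+1,n)}/T_k \Deq Y_{(k-i+1,k)}^\ast$, where the $Y^\ast_j$ are i.i.d.\ standard Pareto, independent of $T_k$. Therefore
\begin{equation*}
\hat{\gamma}_k \;\Deq\; \frac{1}{k}\sum_{i=1}^k \bigl[\log U(T_k\, Y^\ast_i) - \log U(T_k)\bigr].
\end{equation*}

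Next I would invoke the second order condition (\ref{E:second order condition}) in its standard logarithmic form: there exist $A_0(t)\sim A(t)$ and a remainder that is $o(A(t))$ uniformly on compacta of $(0,\infty)$, such that
\begin{equation*}
\log U(tx) - \log U(t) = \gamma \log x + A_0(t)\,\frac{x^{\rho}-1}{\rho} + o\bigl(A(t)\bigr).
\end{equation*}
Plugging this into the sum with $t=T_k$ and $x=Y^\ast_i$, and using that $\log Y^\ast_i =: E_i \iid \Exp(1)$, gives the key decomposition
\begin{equation*}
\hat{\gamma}_k = \gamma\cdot \frac{1}{k}\sum_{i=1}^k E_i \;+\; A_0(T_k)\cdot \frac{1}{k}\sum_{i=1}^k \frac{(Y^\ast_i)^{\rho}-1}{\rho} \;+\; R_{n,k},
\end{equation*}
with $R_{n,k}=o_{\mathbb{P}}(A(n/k))$ after controlling the uniform bound. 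The first term, multiplied by $\sqrt{k}$ and centred at $\gamma$, satisfies the classical CLT $\sqrt{k}(\bar E_k - 1)\Darrow \mathcal{N}(0,1)$, producing the $\mathcal{N}(0,\gamma^2)$ component. For the second term, note that $\E[((Y^\ast_i)^{\rho}-1)/\rho] = 1/(1-\rho)$ and its variance is finite (since $\rho<0$); combined with $T_k/(n/k)\Parrow 1$ and the regular variation of $|A|$, one gets $A_0(T_k)\sim A(n/k)$ in probability, and
\begin{equation*}
\sqrt{k}\,A_0(T_k)\cdot \frac{1}{k}\sum_{i=1}^k \frac{(Y^\ast_i)^{\rho}-1}{\rho} \;\Parrow\; \frac{\lambda}{1-\rho}.
\end{equation*}
Finally, $\sqrt{k}R_{n,k}\Parrow 0$ follows from $\sqrt{k}A(n/k)\to\lambda$ together with the uniform $o$-term. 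Slutsky's theorem combines the three pieces into the stated limit.

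The main obstacle is technical rather than conceptual: it is the uniform control of the remainder in the second order expansion once $x$ is replaced by the random variable $Y^\ast_i$, which can be arbitrarily large or small. One needs the Potter-type inequalities associated with the second order condition (Drees' inequality) to bound $|\log U(tx)-\log U(t)-\gamma\log x - A_0(t)(x^{\rho}-1)/\rho|$ by $A(t)$ times an integrable envelope in $x$, so that taking expectations and summing over $i$ yields the $o_{\mathbb{P}}(1/\sqrt{k})$ bound after multiplication by $\sqrt{k}$. Everything else, including the treatment of $T_k$, is standard via the Rényi representation and the law of large numbers.
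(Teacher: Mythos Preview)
Your argument is correct and is essentially the textbook proof from de Haan and Ferreira, which the paper simply cites without reproducing: the theorem is stated as ``Theorem 3.2.5 in \cite{deHaan-book}'' and no proof is given in the paper itself. The same distributional expansion you derive,
\[
\hat{\gamma}_k \Deq \gamma + \gamma\,\frac{1}{\sqrt{k}}\sqrt{k}\bigl(\bar E_k - 1\bigr) + \frac{A(Y_{(n-k,n)})}{1-\rho} + o_p\bigl(A(n/k)\bigr),
\]
does appear in the paper's Lemma~\ref{L:distributional_representaions} (equation~(\ref{E:Hill-dist})), where it is again attributed to an external source (the proof of Theorem~1 in \cite{deHaan-Peng}) rather than proved in full. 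Your identification of the only real technical hurdle, namely the need for the Drees-type uniform inequality (Theorem~B.2.18 in \cite{deHaan-book}) to control the remainder when $x$ is replaced by the unbounded random ratios $Y^\ast_i$, is exactly what the paper relies on in its own arguments for the related quantities $\overline{YE}$ and $\bar\gamma_k$.
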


\begin{thm}\label{T:1over_Hill}
Under the conditions of Theorem \ref{T:asymp_Hill}, it holds that
\begin{align*}
\sqrt{k} \left(  \frac{1}{\hat{\gamma}_k}-\frac{1}{\gamma} \right) \overset{\mathcal{D}}{\longrightarrow}\ \mathcal{N}\left( \frac{-\lambda }{(1-\rho) \gamma^2}, \frac{1}{\gamma^2} \right).
\end{align*}
\end{thm}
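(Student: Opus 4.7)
The statement is a direct consequence of Theorem \ref{T:asymp_Hill} via the delta method, applied to the smooth map $g:(0,\infty)\to(0,\infty)$, $g(x)=1/x$. Since $\gamma>0$, $g$ is continuously differentiable at $\gamma$ with $g'(\gamma)=-1/\gamma^2$, so the general delta-method template
$$\sqrt{k}(Z_k-\mu)\Darrow \mathcal{N}(b,\sigma^2)\ \Longrightarrow\ \sqrt{k}\bigl(g(Z_k)-g(\mu)\bigr)\Darrow \mathcal{N}\bigl(g'(\mu)b,(g'(\mu))^2\sigma^2\bigr)$$
delivers exactly the claimed limit: mean $-\lambda/\{(1-\rho)\gamma^2\}$ and variance $\gamma^2/\gamma^4=1/\gamma^2$.

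Concretely, I would carry this out by a one-line Taylor expansion. First, Theorem \ref{T:asymp_Hill} implies $\hat{\gamma}_k-\gamma=O_{\mathbb{P}}(1/\sqrt{k})$, so in particular $\hat{\gamma}_k\Parrow \gamma>0$. Writing
$$\frac{1}{\hat{\gamma}_k}-\frac{1}{\gamma}=-\frac{\hat{\gamma}_k-\gamma}{\gamma\,\hat{\gamma}_k}=-\frac{1}{\gamma^2}(\hat{\gamma}_k-\gamma)+R_k,\qquad R_k:=(\hat{\gamma}_k-\gamma)\Bigl(\frac{1}{\gamma^2}-\frac{1}{\gamma\hat{\gamma}_k}\Bigr),$$
and multiplying by $\sqrt{k}$ gives
$$\sqrt{k}\Bigl(\frac{1}{\hat{\gamma}_k}-\frac{1}{\gamma}\Bigr)=-\frac{1}{\gamma^2}\sqrt{k}(\hat{\gamma}_k-\gamma)+\sqrt{k}\,R_k.$$
Since $1/\hat{\gamma}_k\Parrow 1/\gamma$ by the continuous mapping theorem, the prefactor in $R_k$ is $o_{\mathbb{P}}(1)$, and therefore $\sqrt{k}\,R_k=o_{\mathbb{P}}(1)\cdot O_{\mathbb{P}}(1)=o_{\mathbb{P}}(1)$.

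Slutsky's theorem together with Theorem \ref{T:asymp_Hill} then yields
$$\sqrt{k}\Bigl(\frac{1}{\hat{\gamma}_k}-\frac{1}{\gamma}\Bigr)\Darrow -\frac{1}{\gamma^2}\,\mathcal{N}\!\left(\frac{\lambda}{1-\rho},\,\gamma^2\right)=\mathcal{N}\!\left(\frac{-\lambda}{(1-\rho)\gamma^2},\,\frac{1}{\gamma^2}\right),$$
which is precisely the conclusion. There is no real obstacle: the only things to check are that $g$ is differentiable at $\gamma$ (trivially true because $\gamma>0$) and that the linearisation remainder vanishes at rate $\sqrt{k}$, which follows because $\sqrt{k}R_k$ is a product of an $O_{\mathbb{P}}(1)$ term and an $o_{\mathbb{P}}(1)$ term. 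The bias and variance expressions then drop out of the standard affine transformation rule for the Gaussian limit.
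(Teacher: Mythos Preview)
Your proof is correct and follows exactly the same approach as the paper, which simply states ``Applying the delta method to Thm.~\ref{T:asymp_Hill}.'' You have merely spelled out the standard delta-method argument in more detail than the paper's one-line proof.
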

\begin{proof}
Applying the delta method to Thm.\ \ref{T:asymp_Hill}.
\end{proof}

Following the reasoning in \cite{deHaan-book}, page 78, the minimizing point of the AMSE can be found explicitly if considering $A(t)=ct^{\rho}$ with $c\neq0$. In this special case the minimizing sample fraction can be expressed as
\begin{equation}\label{E:k_opt}
 k_{\mathrm{opt}}= \left[  \left( \frac{\gamma^2 (1-\rho)^2}{-2\rho c^2} \right)^{1/(1-2\rho)} n^{-2\rho/(1-2\rho)} \right].
\end{equation}
Under the same assumption we can calculate the minimizing point $k_{\IHS}$ of the asymptotic expectations of $\IHS$ and $\IHS^{-}$. Let $\mathbb{AE}$ denote the asymptotic expectation referring to the expectation of the limiting distribution in Thm. \ref{T:1over_Hill}. Then
\begin{align*}
k_{\IHS} &:= \argmin_{k}\ \mathbb{AE}[\IHS]=\argmin_{k}\left\{ \frac{2}{\gamma k} +\frac{A(n/k)}{2\gamma^2(1-\rho) }\cdot \frac{k-4}{k} \right\} \\
&\approx \argmin_{k}\left\{ \frac{2}{\gamma k} +\frac{A(n/k)}{2\gamma^2(1-\rho) } \right\} = \left[ \left( \frac{4\gamma(1-\rho)}{-\rho c}\right)^{1/(1-\rho)} n^{-\rho/(1-\rho)}  \right].
\end{align*}
It is easy to check that the same formula holds for $\IHS^{-}$ if $c$ is replaced by its absolute value. Further note that by Lemma \ref{L:SKK2_range_k} it is sufficient to consider intermediate sequences when determining the minimizing sequence.
Comparing $k_{\mathrm{opt}}$ and $k_{\IHS}$ for a fixed $\rho>-\infty$ we obtain that
\begin{align}\label{E:k_SKKproportion}
\frac{k_{\IHS}}{k_{\mathrm{opt}}} \approx \left(  \frac{-\rho}{32}\cdot k_{\IHS}\right)^{-1/(1-2\rho)}\approx d\cdot n^{\frac{\rho}{(1-2\rho)(1-\rho)}} \longrightarrow 0,
\end{align}
as $n\rightarrow\infty$ and for a constant $d$ depending on $\rho$, $\gamma$ and $c$.
This supports what equation (\ref{E:SKKvsISE}) already suggested: minimizing $\IHS$ gives asymptotically a smaller $k$ than $k_{\mathrm{opt}}$. Thus, $k_{\IHS}$ asymptotically performs suboptimally for the Hill estimator but still leads to a consistent sequence of estimates. 
For finite samples the ratio crucially depends on $\rho$, and $k_{\IHS}$ can be even larger than $k_{\mathrm{opt}}$, as illustrated in Figure \ref{F:kSKK_over_Kopt}. The graphic presents the quotient of the two sample fractions as a function in the second order parameter $\rho$ for different samples sizes. The parameters $c$ and $\gamma$ are fixed to 1, as they have a weaker impact on the proportion. It also holds that $k_{\IHS}/k_{\mathrm{opt}}\rightarrow1$, as $\rho\rightarrow -\infty$, since both sample fractions converge to $n$ in this case.\\

\begin{figure}
\centering
\begin{minipage}[c]{0.5\textwidth}
	\includegraphics[width=\textwidth, height=0.28\textheight]{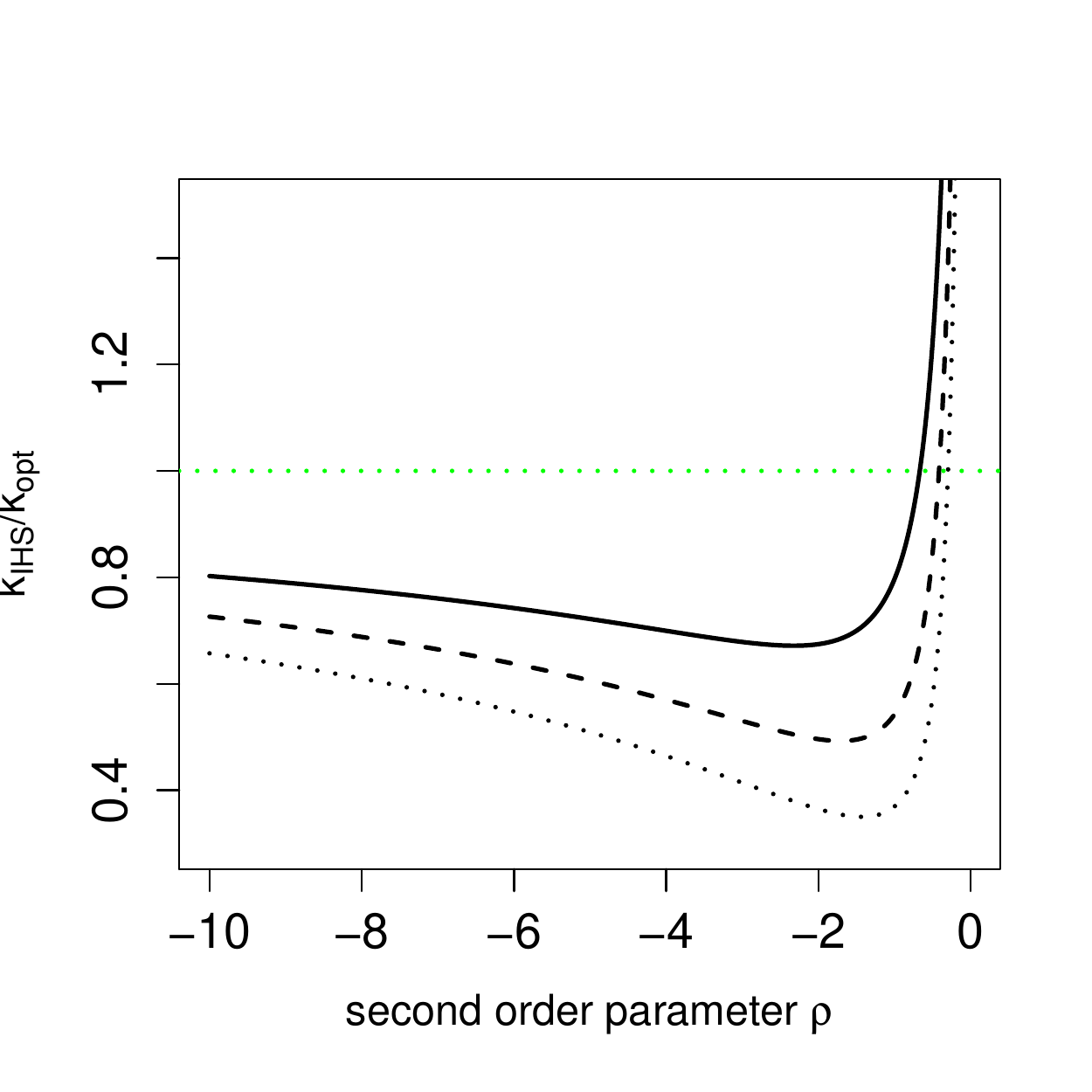}
  \end{minipage}\hfill
  \begin{minipage}[c]{0.48\textwidth}
    \caption{The approximation of the proportion $k_{\IHS}/k_{\mathrm{opt}}$ in (\ref{E:k_SKKproportion}) is plotted as a function in $\rho$ for $\gamma=c=1$ and $n=500$ (solid), $n=5000$ (dashed) and $n=50000$ (dotted).}\label{F:kSKK_over_Kopt}
  \end{minipage}
\end{figure}
%

Although $k_{\IHS}$ is of smaller order than $k_{\mathrm{opt}}$ asymptotically, the simulation study in Section \ref{S:Simulation} shows that $\hat{k}_{\IHS}$ works remarkably well when used for quantile estimation. We consider the following quantile estimator for the $(1-p)$-quantile,
\begin{equation}\label{E:q_hat}
 \hat{q}_k(p) = X_{(n-k,n)} \left(\frac{k}{np}  \right)^{\hat{\gamma}_k}.
\end{equation}
The sample fraction $k_{\mathrm{opt}}$ also minimizes the asymptotic relative MSE of $\hat{q}_k(p)$, see e.g.\ Theorem 4.3.8 in \cite{deHaan-book}. For finite samples however, the quantile estimator seems to benefit from $k_{\IHS}$. This has different reasons, two of which are illustrated by Figure \ref{F:Expectation_Plots}. On the left we see the empirical expectation of IHS, the empirical versions of the MSE of $\hat{\gamma}_k$ and the relative MSE of the quantile estimator,
\begin{equation}
\mathrm{MSEQ}:= \mathbb{E}\left[ \left( \frac{ \hat{q}_k(p)}{q(p)}-1   \right)^2 \right]/ \log\left(\frac{k}{np}\right),
\end{equation}
as used in Theorem 4.3.8 in \cite{deHaan-book}.
We observe that $k_{\IHS}$ (blue dot) is indeed smaller than $k_{\mathrm{opt}}$ (black) but so is the minimizer of MSEQ (pink) as well.\\
On the right we see a plot of the empirical $\mathbb{E}[\IHS]$ and MSE of $\hat{\gamma}_k$ for Loggamma distributed samples of size 5000. This graphic highlights the similarities between MSE and IHS for the boundary case $\rho=0$.\\
These observations indicate why $\hat{k}_{\IHS}$ outperforms other methods that try to minimize the MSE of the Hill estimator when adaptively estimating $q(p)$ by (\ref{E:q_hat}) on most of our exemplary distributions and sample sizes $n=500$ and $n=5000$, see Section \ref{S:Simulation}.

\begin{figure}
\centering
\includegraphics[width=0.48\textwidth, height=0.28\textheight]{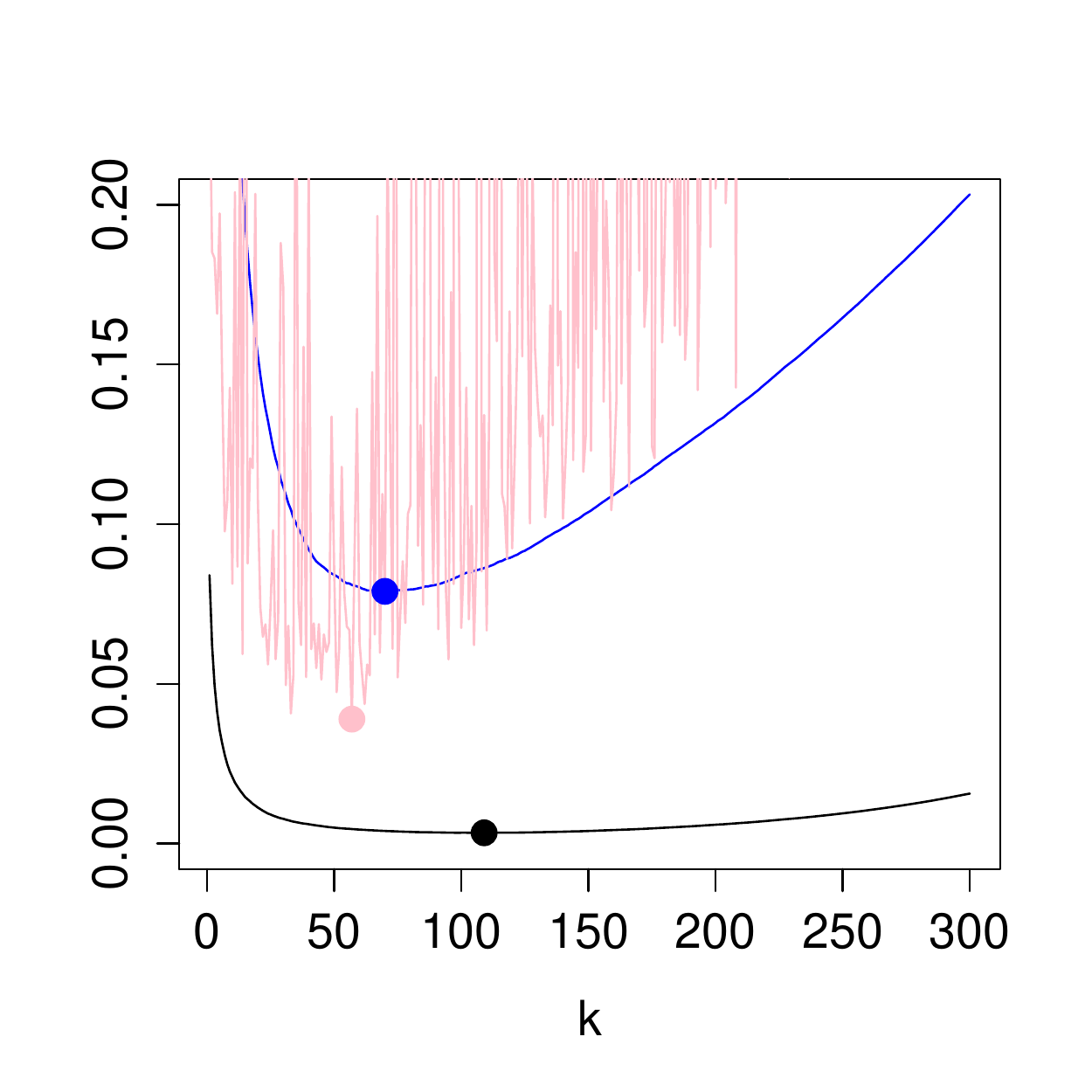}\quad
\includegraphics[width=0.48\textwidth, height=0.28\textheight]{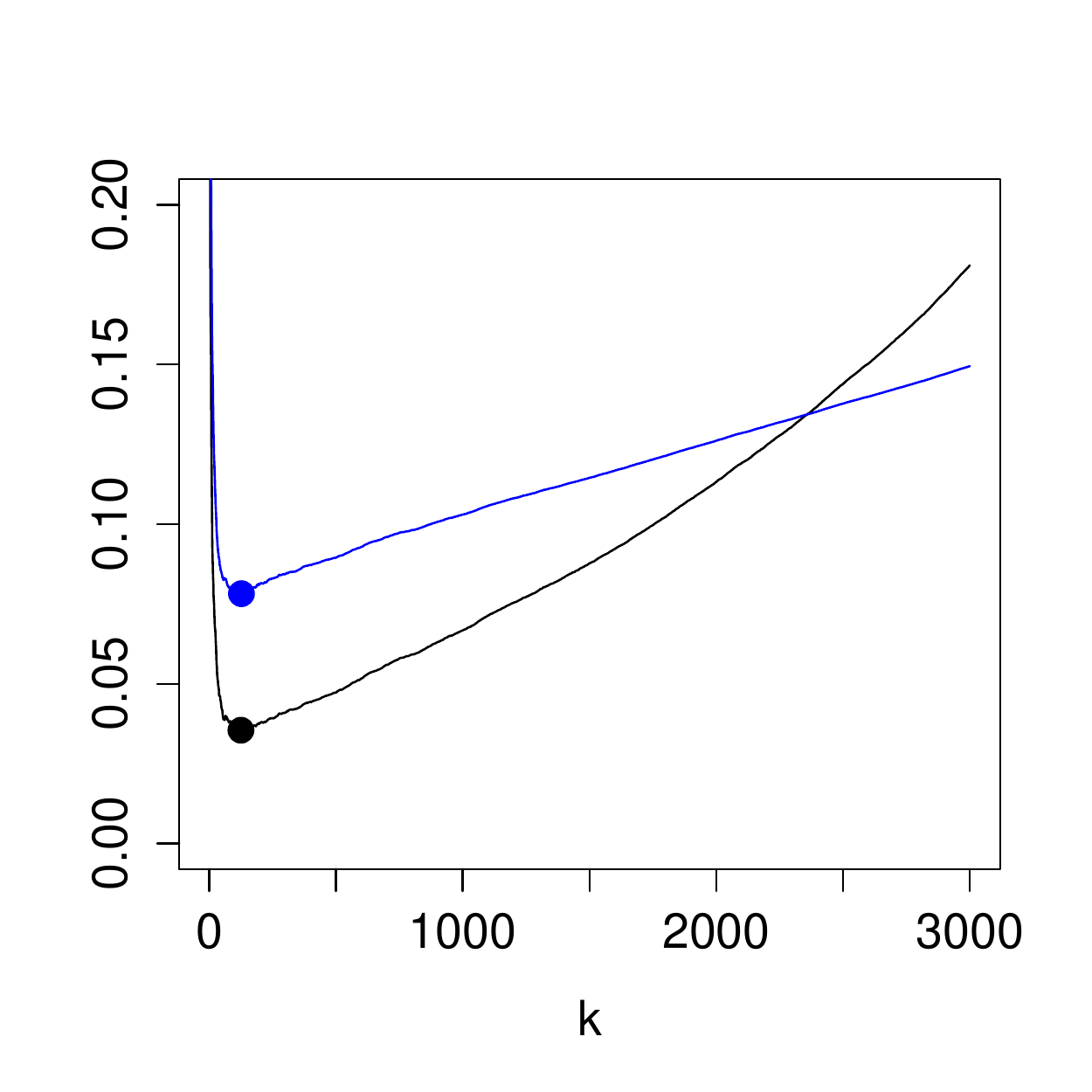}
\caption{Empirical expectations of IHS (blue), MSE (black) and MSEQ (pink). The left plot is based on 10,000 samples from a Fr\'echet(2) distribution of size 500. The graphic on the right is based on 500 samples of size 5000 from a Loggamma distribution.}\label{F:Expectation_Plots}
\end{figure}

\section{SAMSEE - The smooth AMSE estimator}\label{S:estimateMSE}

In this section we illustrate a way to smoothly estimate the AMSE of the Hill estimator. Via minimizing this AMSE estimator, called SAMSEE, we obtain an estimator for $k_{\mathrm{opt}}$. By this means, we extend previous methods which also estimate $k_{\mathrm{opt}}$ by estimating the AMSE itself.
From Thm.\ \ref{T:asymp_Hill} it is easy to see that the AMSE, which is the asymptotic variance plus the asymptotic squared bias, equals
\begin{equation}\label{E:asymp_MSE}
\mathbb{AE}[(\hat{\gamma}_k-\gamma)^2]= \frac{\gamma^2}{k} + \frac{A(n/k)^2}{(1-\rho)^2}.
\end{equation}
Thus, to estimate the AMSE as a function in $k$ we employ two estimators, one for $\gamma$ and one for the bias term as a combination of $\rho$ and $A$. First we explain how we estimate $\gamma$ and then we define the bias estimator. This bias estimator has a quite smooth sample path in $k$, and it depends on the choice of a large sample fraction $K$, for which we afterwards provide a data-driven selection procedure.\\
Note that, for the moment, we assume that the second order parameter $\rho$ is equal to $-1$ to motivate the construction of the AMSE estimator. The idea of misspecifying $\rho$ to simplify estimation -- via avoiding the additional uncertainty through estimating $\rho$ or selecting an influential tuning parameter -- was already used, for example, by \cite{Gomes01}, \cite{DreesK} and \cite{Goegebeur}. It is also motivated by the simulations in Section \ref{S:rho not -1}.\\
For $\gamma$ we consider the generalized Jackknife estimator $\hat{\gamma}_k^{\mathrm{GJ}}$ introduced by \cite{Gomes01} as $\gamma_n^{\mathrm{G}_1}$. This estimator is defined by
\begin{equation}\label{E:GJ}
 M_{n,k}:= \frac{1}{k}\sum_{i=1}^k Y_{(i,k)}^2,\quad \hat{\gamma}_{\mathrm{V},k}:=\frac{M_{n,k}}{2\hat{\gamma}_k} , \text{ and} \quad   \hat{\gamma}_k^{\mathrm{GJ}}:= 2\hat{\gamma}_{\mathrm{V},k} - \hat{\gamma}_k,
\end{equation}
where $Y_{(i,k)}$ denotes the log-spacings as in equation (\ref{E:Hill}).
Note, that $\hat{\gamma}_{V,k}$ is the de Vries estimator introduced under this name in \cite{deHaan-Peng} and $\hat{\gamma}_k$ is the Hill estimator as above.
The generalized Jackknife estimator has a reduced bias compared to the Hill estimator and is even asymptotically unbiased if $\rho=-1$, see (2.11) in \cite{Gomes01}. This property is useful here, since the bias estimator $\bar{b}_{\mathrm{up},K,k}$ defined in the following performs optimally for $\rho=-1$ as well. Furthermore, the same large sample fraction $K$ can be used for $\hat{\gamma}_K^{\mathrm{GJ}}$ and $\bar{b}_{\mathrm{up},K,k}$.\\
To construct this bias estimator, we study the following averages of Hill estimators,
\begin{equation}
\bar{\gamma}_k := \frac{1}{k}\sum_{i=1}^k  \hat{\gamma}_i \quad \text{and} \quad
\bar{\gamma}_{\mathrm{up},K,k} := \frac{1}{K-k+1}\sum_{i=k}^K \hat{\gamma}_i, \nonumber
\end{equation}
where $k<K$.
Plotting these averages illustrates how they smoothly frame the sample path of the Hill estimator. Especially the upper mean $\bar{\gamma}_{\mathrm{up},K,k}$ seems to contain a lot of structural information about the underlying asymptotic bias of the Hill estimator when choosing the upper bound $K$ appropriately, see Figure \ref{Fig:Hill_up_low}.
\begin{figure}
\centering
\begin{minipage}[c]{0.5\textwidth}
\includegraphics[width=0.98\textwidth, height=0.31\textheight]{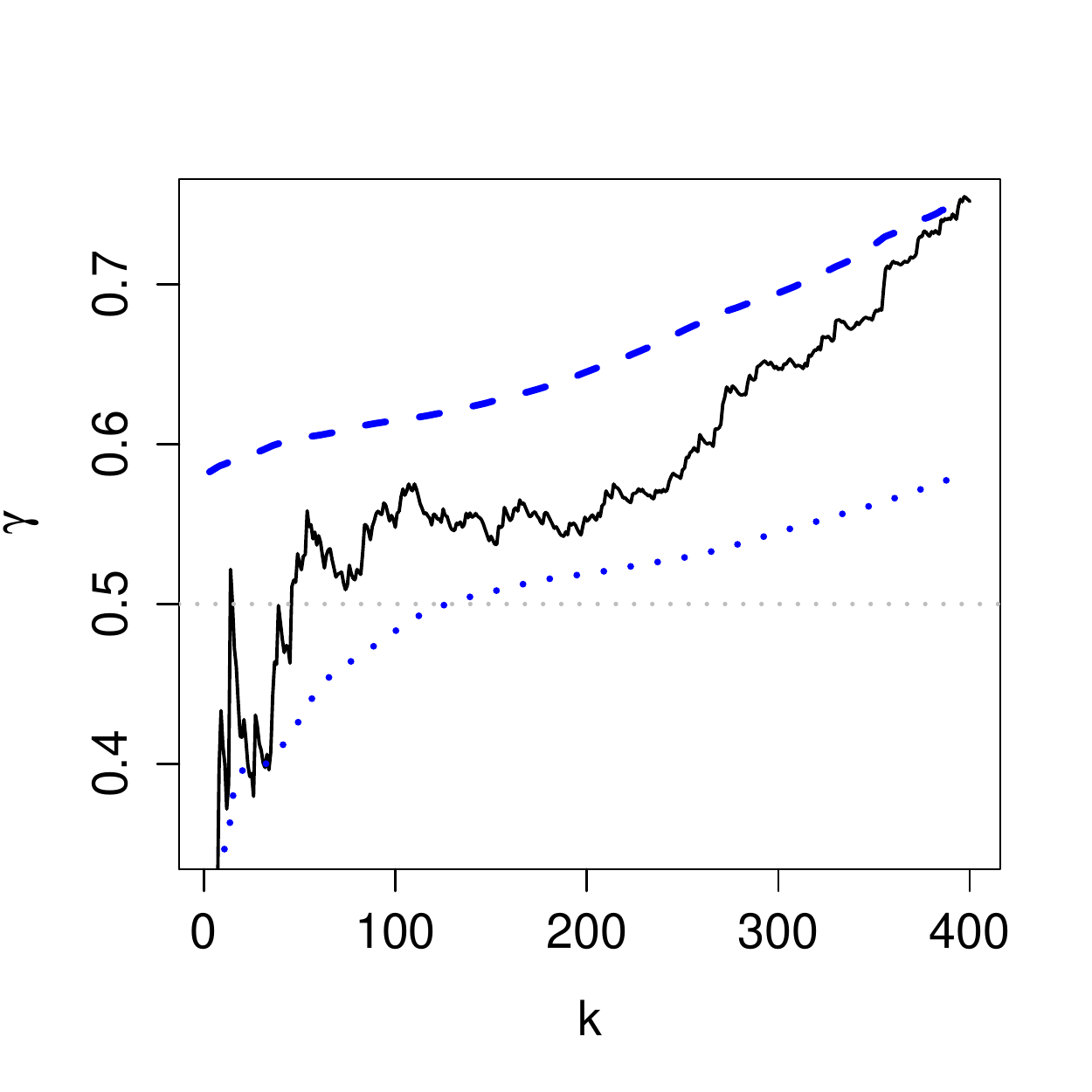}
\end{minipage}
\begin{minipage}[c]{0.48\textwidth}
\caption{The plot shows the Hill estimator (black), $\bar{\gamma}_k$ (blue dotted) and $\bar{\gamma}_{\mathrm{up},K,k}$ (blue dashed) with $K=400$ for a Fr\'echet(2) sample of size $n=500$ with true extreme value index $1/2$. }\label{Fig:Hill_up_low}
\end{minipage}
\end{figure}
This similarity between the upper mean and the bias of the Hill estimator inspires the definition
\begin{equation}\label{E:bias_est}
\bar{b}_{\mathrm{up},K,k}:= \bar{\gamma}_{\mathrm{up},K,k}-\bar{\gamma}_K.
\end{equation}
The estimator $\bar{b}_{\mathrm{up},K,k}$ is indeed a sensible estimator for a bias function, since
\begin{equation} \label{E:bias_expectation}
\mathbb{AE}[\bar{b}_{\mathrm{up},K,k}]=\frac{-\rho A(n/k)}{(1-\rho)^2} = \frac{1}{2} \frac{A(n/k)}{(1-\rho)} 
\end{equation}
follows for $\rho=-1$ from Theorem \ref{T:Bias}.\\
\cite{Danielsson} use $(\hat{\gamma}_{\mathrm{V},k}-\hat{\gamma}_k)$ to access the bias of $\hat{\gamma}_k$ and apply a double bootstrap procedure to stabilize this highly varying estimate. We use the difference of two estimators for $\gamma$ as well, but now consider averaging to smooth the bias estimate. The idea to average the Hill estimator in order to smooth the Hill plot and decrease the variance is also studied in \cite{Resnick97}.\\
It remains to choose an appropriate $K$ in order to complete SAMSEE and to estimate the optimal sample fraction $k_{\mathrm{opt}}$. We need $K$ to be large enough to allow for minimization over all relevant $k$ and small enough to be an intermediate sequence itself (see Theorem \ref{T:Bias} for this condition). To find such a $K$ we use the following relation between the estimators,
\begin{align}\label{E:asympEAprox}
\mathbb{AE}[\hat{\gamma}_k] = \mathbb{AE}[\hat{\gamma}_{\mathrm{V},k}+ \bar{b}_{\mathrm{up},K,k}].
\end{align}
This provides us with a relatively stable function in $k$, $\hat{\gamma}_{\mathrm{V},k}+ \bar{b}_{up,K,k}$, that has the same asymptotic expectation as the highly non-smooth Hill estimator. We want to find an intermediate sequence $K$ for which (\ref{E:asympEAprox}) holds and thus define
\begin{equation}\label{E:E2}
E^2(K):=\frac{1}{K}\sum_{k=1}^{K} \left( \hat{\gamma}_{\mathrm{V},k}+ \bar{b}_{\mathrm{up},K,k} - \hat{\gamma}_k \right)^2
\end{equation}
to measure the deviation from approximation (\ref{E:asympEAprox}) uniformly over all $k\leq K$. Based on this, we suggest to choose
\begin{align}\label{E:Kstar}
K^{*} :=\underset{K}{\argmin}\left\{\sum_{L=K-2}^{K+2} \Big(E^2(K)- E^2(L) \Big)^2 \right\}.
\end{align}
In this way we select a $K^{*}$ where the asymptotic approximation (\ref{E:asympEAprox}) is most stable, since we minimize the local variation of $E^2(K)$. Simulations suggest that this criterion is not sensitive to slightly increasing the region of stability $\{K-h,\dots,K+h\}$ from $h=2$ to $h=5$ or $10$ depending on the sample size.\\

Now we finally combine the previously described estimators to approach the AMSE in (\ref{E:asymp_MSE}) under the assumption that $\rho=-1$. With $K^{*}$ in (\ref{E:Kstar}) and the property of $\bar{b}_{\mathrm{up},K,k}$ in (\ref{E:bias_expectation}), we obtain an estimator for the AMSE of the Hill estimator and for $k_{\mathrm{opt}}$ by
\begin{align}
\Samsee(k) &:= \frac{(\hat{\gamma}_{K^{*}}^{\mathrm{GJ}})^2}{k} + 4\bar{b}_{\mathrm{up},K^{*},k}^2,\label{E:M_hat} \\
\hat{k}_{\Samsee}&:= \underset{1<k< K^{*}}{\mathrm{argmin}}\ \Samsee(k).\nonumber
\end{align}
Figure \ref{F:AMSE_Frechet2} illustrates how such a smooth estimate of the AMSE can look like. On the left, SAMSEE is displayed for a Fr\'echet sample with parameters $\gamma=1/2$ and $\rho=-1$. On the right, the Hill plot of the same sample is presented for all  $k\leq K^{*}=388$. \\

This smooth estimate of the AMSE can be useful beyond the context of threshold selection. For extreme value mixture models or Bayesian threshold selection approaches, SAMSEE could be used to construct a transition function between bulk and tail distribution or an empirical prior for the threshold, respectively, see \citet{review} for a review on mixture models.
\begin{figure}
\centering
\includegraphics[width=0.48\textwidth, height=0.28\textheight]{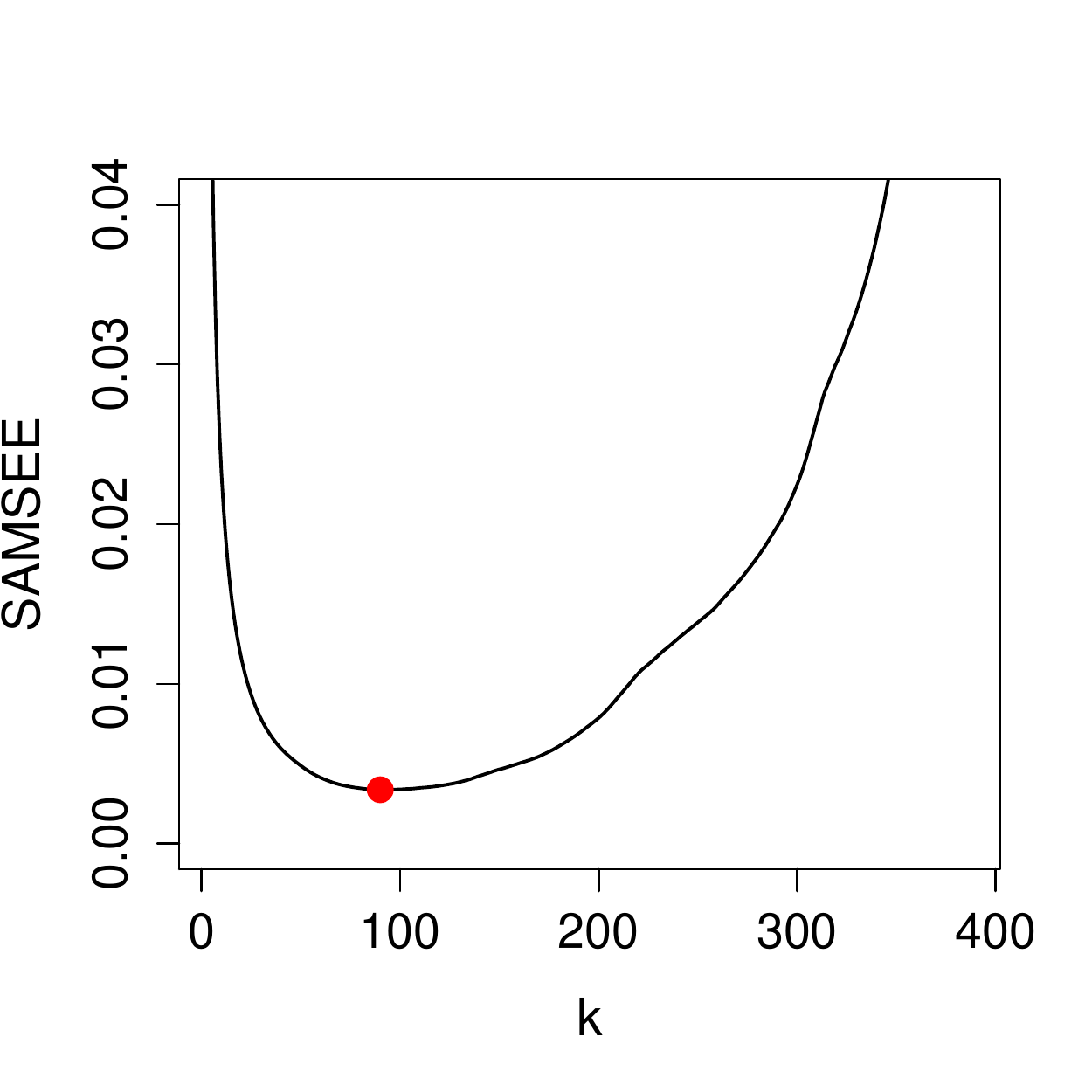}
\includegraphics[width=0.48\textwidth, height=0.28\textheight]{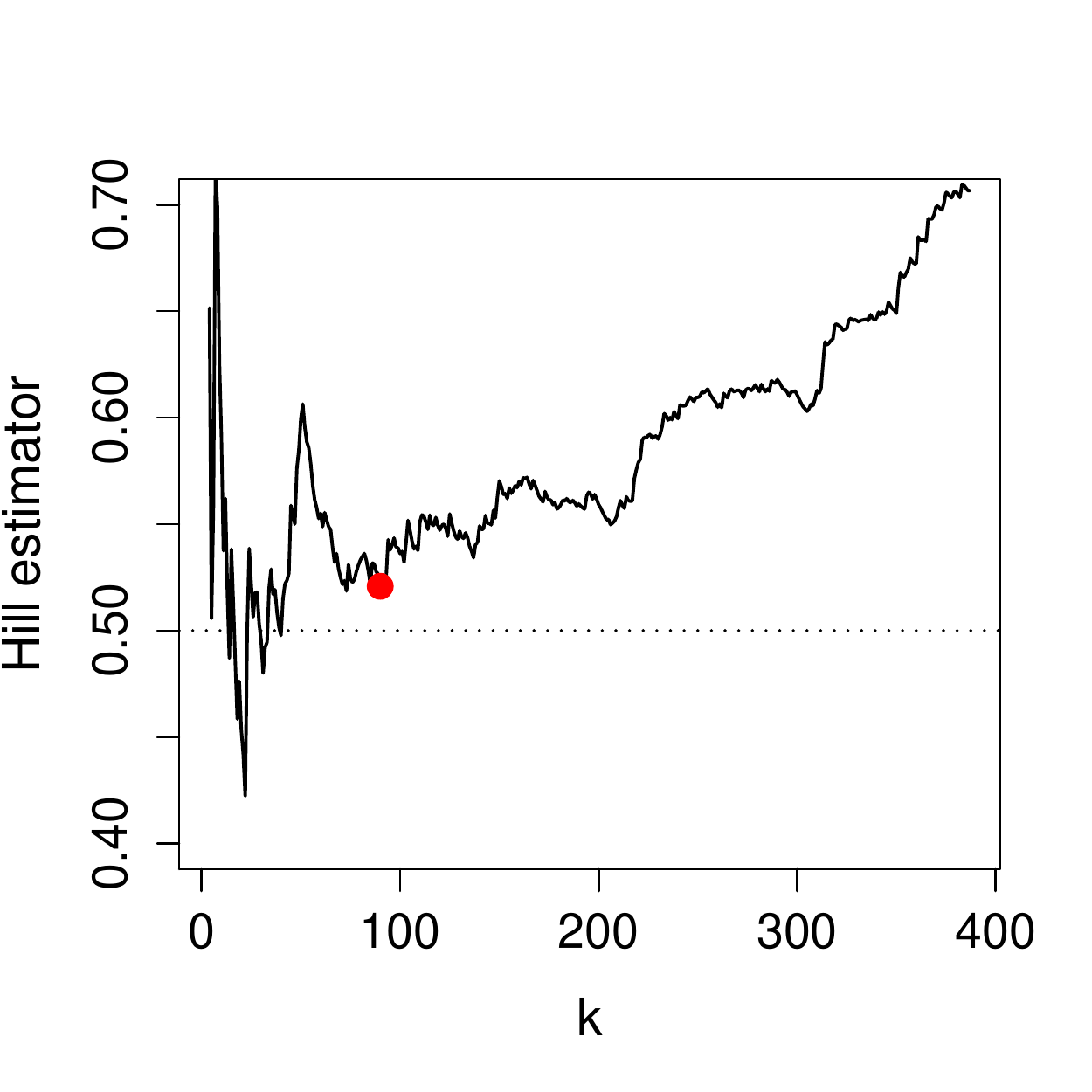}
\caption{SAMSEE with $K^{*}=388$ on the left next to the Hill plot for the same Fr\'echet(2) random sample of size $n=500$ for $k\leq K^{*}$. The red dot indicates the selected sample fraction $\hat{k}_{\Samsee}$ and the adaptive Hill estimate $\hat{\gamma}_{\hat{k}_{\Samsee}}$ in the right plot.}\label{F:AMSE_Frechet2}
\end{figure}

\subsection{SAMSEE if $\rho\neq -1$}\label{S:rho not -1}

We next want to analyse SAMSEE in the broader context of an unknown second order parameter $\rho$. The first thing to note is that the generalized Jackknife estimator is no longer unbiased in this situation. Secondly, the behaviour of our bias estimator changes, as it is described in the following Theorem.

\begin{thm}\label{T:Bias}
Under the conditions of Theorem \ref{T:asymp_Hill} and for $k/K\rightarrow c$ with $0<c<1$ as $n\rightarrow\infty$, it holds for $\bar{b}_{\mathrm{up},K,k}$ in (\ref{E:bias_est}) that
\begin{align*}
\sqrt{k}\cdot\bar{b}_{\mathrm{up},K,k} \Darrow \mathcal{N}\left( \frac{-\rho\lambda}{(1-\rho)^2}\delta_{\rho}(c),\ \gamma^2 \nu(c)  \right),
\end{align*}
where $\delta_{\rho}(c)=(c^{\rho}-1)/(-\rho(c^{-1}-1))$ and $\nu(c)=2c^2/(1-c)^2\cdot(1-c+c\log(c))$ with $0\leq\nu(c)\leq 1$.
\end{thm}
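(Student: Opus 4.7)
My plan is to decompose $\bar{b}_{\mathrm{up},K,k}$ into a deterministic bias piece and a centred stochastic piece, and to treat each separately. The starting point is the classical exponential representation of the Hill estimator under the second order condition (see e.g.\ Drees (1998) or Section 3.2 of de Haan and Ferreira): one couples the sample to i.i.d.\ standard exponentials $\eta_1,\eta_2,\ldots$ so that, uniformly in $j$ over the intermediate regime,
\begin{equation*}
\hat{\gamma}_j - \gamma \;=\; \frac{\gamma}{j}\sum_{l=1}^{j}(\eta_l-1) \;+\; \frac{A(n/j)}{1-\rho} \;+\; R_{n,j}, \qquad R_{n,j}=o_{\mathbb{P}}(j^{-1/2}).
\end{equation*}
Substituting this into $\bar{\gamma}_{\mathrm{up},K,k}-\bar{\gamma}_K$ and swapping summation orders writes $\bar{b}_{\mathrm{up},K,k}$ as a deterministic bias $b_n(K,k)$ plus the linear statistic $\gamma\sum_{l=1}^{K} w_l(\eta_l-1)$ with deterministic weights
\begin{equation*}
w_l \;=\; \frac{1}{K-k+1}\sum_{j=l\vee k}^{K}\frac{1}{j}\;-\;\frac{1}{K}\sum_{j=l}^{K}\frac{1}{j}.
\end{equation*}

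\textbf{Mean.} For $b_n(K,k)$ I would invoke the regular variation of $|A|$ with index $\rho$, so $A(n/j)\approx A(n/K)(j/K)^{-\rho}$ uniformly in the range of interest. Passing the resulting Riemann sums to $\int_c^1 u^{-\rho}\,du$ and $\int_0^1 u^{-\rho}\,du=(1-\rho)^{-1}$ yields
\begin{equation*}
b_n(K,k) \;\sim\; \frac{A(n/K)}{(1-\rho)^2}\cdot\frac{c-c^{1-\rho}}{1-c}.
\end{equation*}
Rewriting $A(n/K)\approx c^\rho A(n/k)$ by a second application of regular variation and using $\sqrt{k}A(n/k)\to\lambda$, one gets $\sqrt{k}\,b_n(K,k)\to \lambda c(c^\rho-1)/[(1-\rho)^2(1-c)]$, which by direct substitution of the definition $\delta_\rho(c)=(c^\rho-1)/[-\rho(c^{-1}-1)]$ equals $-\rho\lambda\,\delta_\rho(c)/(1-\rho)^2$.

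\textbf{Variance and CLT.} For $\sum_l w_l^2$ I would split the sum into $l\leq k$ and $l>k$, replace the partial harmonic tails by logarithms, and pass to the integrals $\int_0^c(\log u)^2\,du$ and $\int_c^1(\log u)^2\,du$; after a cancellation of all $(\log c)^2$ contributions, the three surviving pieces combine to
\begin{equation*}
k\sum_{l=1}^{K} w_l^2 \;\longrightarrow\; \frac{2c^2}{1-c}+\frac{2c^3\log c}{(1-c)^2} \;=\; \nu(c).
\end{equation*}
Since $\max_l|w_l|=O(\log K/K)$ while $\sum_l w_l^2\asymp 1/k$, Lindeberg's condition for the triangular array of independent summands $(\eta_l-1)$ is immediate, and the CLT gives the $\mathcal{N}(0,\gamma^2\nu(c))$ limit for the stochastic part. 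Combining with the bias computation above yields the stated limit.

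\textbf{Main obstacle.} The principal technical issue is that the exponential representation is only uniform in the intermediate regime, while $\bar{\gamma}_K$ averages $\hat{\gamma}_j$ all the way down to $j=1$ where the Hill estimator does not concentrate around $\gamma$. I would dispose of this by truncating at a slowly growing $k_0$ (e.g.\ $k_0=\log K$), bounding the contribution of $K^{-1}\sum_{j\leq k_0}(\hat{\gamma}_j-\gamma)$ by crude moment estimates (each $\hat{\gamma}_j$ is $O_{\mathbb{P}}(1)$), and applying the uniform representation on $j>k_0$; this residual is $o_{\mathbb{P}}(k^{-1/2})$ and is absorbed into $R_{n,j}$.
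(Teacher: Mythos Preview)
Your proposal is correct and proceeds along the same lines as the paper---both rest on the exponential/R\'enyi representation of $\hat\gamma_j-\gamma$ and the regular variation of $A$---but the organisation differs in a way worth noting. You substitute the representation directly into $\bar b_{\mathrm{up},K,k}$, read off the weights $w_l$, and apply Lindeberg's CLT to the single weighted sum $\sum_l w_l(\eta_l-1)$; the variance integral then drops out after the $(\log c)^2$ cancellation you describe. The paper instead builds the result modularly: it first proves asymptotic normality of $\bar\gamma_k$ and of $\bar\gamma_{\mathrm{up},K,k}$ separately (Theorems~\ref{T:averageHill} and~\ref{T:upperMean}), introducing an auxiliary statistic $R_{k,n}=\sqrt{k}\bigl(k^{-1}\sum_i e_{i+1}^k E_i-1\bigr)$ and a covariance lemma (Lemma~\ref{L:Cov_K_k}) for $\Cov(R_{K,n},R_{k,n})$, and then combines them. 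Your route is shorter and avoids naming the $R$-statistics; the paper's route yields the marginal limits of $\bar\gamma_k$ and $\bar\gamma_{\mathrm{up},K,k}$ as by-products.

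On your ``main obstacle'': your truncation at $k_0=\log K$ is a valid fix, but the paper handles the small-$j$ problem more elegantly. Rather than truncating, it uses the algebraic identity
\[
\bar\gamma_k=\overline{YE}-\hat\gamma_k+\tfrac{1}{k}\log\bigl(X_{(n,n)}/X_{(n-k,n)}\bigr),\qquad \overline{YE}:=\tfrac{1}{k}\sum_{i=1}^k e_i^k\,\log\bigl(X_{(n-i+1,n)}/X_{(n-k,n)}\bigr),
\]
which rewrites the average of Hill estimators entirely in terms of log-spacings relative to the single threshold $X_{(n-k,n)}$. Since all summands now live in the tail, the second-order expansion applies uniformly with no truncation needed, and the residual term is shown to be $O_{\mathbb P}(\log k/k)$. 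This identity is perhaps the one non-obvious ingredient in the paper's argument that your direct approach replaces by the cruder (but perfectly adequate) truncation.
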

\begin{proof}
The proof can be found at the end of Appendix \ref{S:proofs}.
\end{proof}

From Theorem \ref{T:Bias} follows that
\begin{equation}
\mathbb{AE}[\bar{b}_{\mathrm{up},K,k}] = \frac{-\rho A(n/k)}{(1-\rho)^2} \cdot \delta_{\rho}(k/K). \nonumber
\end{equation} 
For $\rho=-1$ the function $\delta_{-1}(c)$ is equal to 1. If $\rho\neq -1$, we can observe that $\delta$ bends our bias estimator and it will therefore increase slightly too fast or too slow.
We can still apply SAMSEE in this situation and select $K^{*}$ from (\ref{E:Kstar}). However, approximation (\ref{E:asympEAprox}) does not hold anymore and instead the following holds,
\begin{align}
\mathbb{AE}[\hat{\gamma}_k - \hat{\gamma}_{\mathrm{V},k}]- \mathbb{AE}[\bar{b}_{\mathrm{up},K,k}] = \frac{-\rho A(n/k)}{(1-\rho)^2}\left( 1-\delta_{\rho}(k/K)  \right) .\label{E:asympEApox_rho}
\end{align}
The absolute value of the error described by (\ref{E:asympEApox_rho}) is high if $\delta_{\rho}$ strongly differs from $1$ and the bias term $A$ is large. If $\rho\neq-1$, $\delta_{\rho}$ indeed deviates from $1$ and we minimize the error by minimizing the bias. This is why applying (\ref{E:Kstar}) in this case leads to a small $K^{*}$. On the other hand, if $\delta_{\rho}=1$, the approximation stays valid for an increasing bias and $K^{*}$ will typically be larger.\\

An alternative to fixing $\rho=-1$ is to incorporate a consistent estimator $\hat{\rho}$ of the second order parameter. This can be done via
\begin{align*}
&\qquad  K_{\hat{\rho}}^{*}  :=\underset{K}{\argmin}\left\{\sum_{L=K-2}^{K+2} \Big(E_{\hat{\rho}}^2(K)- E_{\hat{\rho}}^2(L) \Big)^2 \right\},\\
&\text{where }E_{\hat{\rho}}^2(K)  :=\frac{1}{K}\sum_{k=1}^{K} \left( \hat{\gamma}_{\mathrm{V},k}+ \bar{b}_{\mathrm{up}, K,k}/\delta_{\hat{\rho}}(k/K) - \hat{\gamma}_k \right)^2.
\end{align*}
and 
\begin{align}
\Samsee_{\hat{\rho}}(k)&:= \frac{(\hat{\gamma}_{ K_{\hat{\rho}}^{*}}^{\mathrm{GJ}})^2}{k} + \left((1-\hat{\rho}) \frac{ K_{\hat{\rho}}^{*}/k-1}{(k/ K_{\hat{\rho}}^{*})^{\hat{\rho}}-1}\cdot \bar{b}_{\mathrm{up}, K_{\hat{\rho}}^{*},k}\right)^2,\label{E:M_hat_rho} \\
\hat{k}_{\hat{\rho},\Samsee}&:= \underset{1<k< K^{*}}{\mathrm{argmin}}\ \Samsee_{\hat{\rho}}(k).\nonumber
\end{align}
In this way we can construct an estimator for $k_{\mathrm{opt}}$ in the general setting of Pareto-type distributions.\\
In Table \ref{Tab:rho}, we present the results of a simulation study indicating for which distributions it is beneficial to use $\hat{\rho}$ instead of $\rho=-1$. We estimate $\rho$ using the estimator $\hat{\rho}^{(1)}$ suggested in Theorem 1 in \cite{DreesK}. The results indicate that, in general, it is sensible to fix $\rho=-1$ in SAMSEE, since only for the Cauchy distribution using $\hat{\rho}$ performs slightly better regarding bias and RMSE. This confirms the observations already made by others \citep{Gomes01, DreesK, Goegebeur}, that it is often recommendable to select $\rho=-1$ instead of allowing for further variability by including an additional estimator.

\begin{table}[ht]
\centering
\begin{tabular}{r|cc|ccc}
\hline
\multicolumn{3}{c}{ }& \multicolumn{3}{|c}{$\mathbb{E}[\hat{\gamma}_{\hat{k}}]$ (RMSE)}\\
\multicolumn{1}{c}{ }& $\gamma$ & $\rho$ & true $\rho$ & $\rho=-1$ & $\hat{\rho}$ \\ 
  \hline
Student-t(6)  & 0.17 &-1/3 & 0.21 (0.09)& 0.26 (0.12) & 0.28 (0.14) \\ 
  Fr\'echet(2) & 0.50 & -1 & 0.51 (0.07) & 0.51 (0.07) & 0.51 (0.08) \\ 
  Cauchy & 1.00 & -2 & 1.01 (0.13) & 0.97 (0.17) & 0.99 (0.16) \\ 
  Burr(2,1) & 2.00 & -1& 2.05 (0.34) & 2.05 (0.34) & 2.03 (0.40) \\ 
   \hline
\end{tabular}
\caption{The averages of adaptive $\gamma$ estimates and their root mean square error (RMSE) in brackets are presented for thresholds $\hat{k}$ that are selected using SAMSEE or $\Samsee_{\hat{\rho}}$ with the true $\rho$, $\rho=-1$ or $\hat{\rho}=\hat{\rho}^{(1)}$.}\label{Tab:rho}
\end{table}

\section{Simulation study}\label{S:Simulation}

In the following we numerically analyse the performance of eight threshold selection methods on heavy-tailed distributions with very different tail behaviour. The simulation study is based on the following distributions:
\begin{itemize}
\item the Student-t distribution with 6 degrees of freedom, which corresponds to $\gamma=1/6$ and $\rho=-1/3$,
\item the Fr\'echet distribution with parameter $\alpha=2$ and distribution function $F(x)=\exp(-x^{-\alpha})$ for $x>0$, which implies $\gamma=1/2$ and $\rho=-1$,
\item the standard Cauchy distribution leading to a tail behaviour with $\gamma=1$ and $\rho=-2$,
\item the Loggamma distribution with $\gamma=1$ and $\rho=0$ and density function 
\begin{equation} f(x)=\log(x) x^{-2}\ \mathds{1}_{[1,\infty)}(x),\nonumber
\end{equation}
\item the Burr distribution with a parametrisation such that $\gamma=2$, $\rho=-1$ and distribution function  \begin{equation} F(x)=1-(1+\sqrt{x})^{-1},\ \text{for } x>0,\nonumber
\end{equation}
\item a logarithmically perturbed Pareto distribution of the random variable $g(U)$ with $\gamma=1$ and $\rho=-1$, where $U\sim \mathrm{Unif}(0,1)$ and $g(x)=x^{-1}/\log(x^{-1})$. This distribution is denoted as negBias due to its negative bias in the Hill estimator.
\end{itemize} 

On these distributions we evaluate the methods by their root mean square error (RMSE) when adaptively estimating $\gamma$ with the Hill estimator relative to the RMSE obtained using $k_{\mathrm{opt}}$, 
\begin{equation}
\mathrm{EFF}_{\gamma}(\hat{k}):=\sqrt{\frac{\mathbb{E}_n[(\hat{\gamma}_{\hat{k}}-\gamma)^2]}{\mathbb{E}_n[(\hat{\gamma}_{k_{\mathrm{opt}}}-\gamma)^2]}}, \nonumber
\end{equation}
where $\mathbb{E}_n$ denotes the empirical expectation. These efficiency quotients are also used by, e.g., \cite{Guillou}, \cite{Gomes01} and \cite{DreesK}. The smaller the quotient the better the threshold selection procedure performs compared to the asymptotically optimal sample fraction $k_{\mathrm{opt}}$. 
Furthermore, we study the efficiency in quantile estimation with the estimator defined in (\ref{E:q_hat}) for $p=0.001$,
\begin{equation}
\mathrm{EFF}_{q}(\hat{k}):=\sqrt{ \frac{\mathbb{E}_n[ (\hat{q}_{\hat{k}}-q)^2 ]}{\mathbb{E}_n[ (\hat{q}_{k_{\mathrm{opt}}}-q)^2]}}. \nonumber
\end{equation}
Since we do not know the true minimizer $k_{\mathrm{opt}}$ of the AMSE, we utilize an empirical version suggested by \citet{Gomes01}. Following their approach we approximate $k_{\mathrm{opt}}$ by the mean of 20 independent replicates of $\bar{k}_{\mathrm{opt}}$, which is the minimizer of the empirical MSE based on $1000$ samples, i.e.\ 
$
\bar{k}_{\mathrm{opt}}= \underset{k}{\mathrm{argmin}} \ \mathbb{E}_{n=1000} [ (\hat{\gamma}_k-\gamma)^2 ]. \nonumber
$\\

We compare these efficiency values for eight different threshold selection methods. Most of the considered approaches are constructed for adaptive estimation of $\gamma$ applying the Hill estimator. This includes one procedure that looks for a stable region among the Hill estimates, while the others aim to estimate $k_{\mathrm{opt}}$. The only exception is the IHS approach discussed in Section \ref{S:SKK}, which is motivated to minimize the deviation from the exponential approximation. We still evaluate the performance of this procedure in the same simulations, although it is not primarily tailored for the specific applications.
In total, the following methods are considered:
\begin{description}
\item[{\bf sIHS:}] IHS smoothed by using the \textit{eBsc} package, see Section \ref{S:SKK},
\item[{\bf SAM:}]  SAMSEE procedure with $\rho=-1$ as defined by (\ref{E:M_hat}) in Section \ref{S:estimateMSE},
\item[{\bf GH:}] method by \cite{Guillou} utilizing $c_{\mathrm{crit}}=1.25$ and $p=1$,
\item[{\bf DK:}] procedure by \cite{DreesK} with fixed $\rho=-1$,
\item[{\bf GO:}] approach by \cite{Goegebeur} defined in their equation (3.3) with fixed $\rho=-1$,
\item[{\bf DB:}] double bootstrap approach by \cite{Danielsson} with the choice $n_1=120$ if $n=500$ and $n_1=1000$ if $n=5000$,
\item[{\bf B:}] method by \cite{Beirlant02} with $\rho=-1$,
\item[{\bf RT:}] method by \cite{Reiss-Thomas} with $\beta=0$ as suggested by \cite{Neves-RT}. 
\end{description} 

\begin{table}[h]
\centering
\begin{tabular}{rccccccc|c}
  \hline
 $n=500$  & SAM & GH & DK & GO & DB & B & RT & sIHS \\ 
  \hline
Student-t(6) & 1.07 & 1.68 & 1.18 & 1.38 & 1.06 & \textcolor{blue}{1.04} & \textcolor{blue}{1.04} & 1.14 \\ 
  Fr\'echet(2) & 1.13 & 1.15 & \textcolor{blue}{1.08} & 1.12 & 1.60 & 1.49 & 2.00 & 1.41 \\ 
  Cauchy & 1.37 & 1.19 & 1.32 & \textcolor{blue}{1.16} & 2.14 & 1.85 & 2.11 & 1.47 \\ 
  Loggamma & 0.98 & 1.06 & 1.27 & 1.11 & 1.12 & 1.04 & 1.32 & \textcolor{blue}{0.78} \\ 
  Burr(2,1) & \textcolor{blue}{1.11} & 1.22 & 1.47 & 1.13 & 1.68 & 1.42 & 1.82 & 1.14 \\ 
  negBias & \textcolor{blue}{1.06} & 1.13 & 1.56 & 1.13 & 1.07 & 1.22 & 1.89 & 2.27 \\ 
   \hline
\end{tabular}
\vspace*{0.2cm}

\begin{tabular}{rccccccc|c}
  \hline
$n=5000$ & SAM & GH & DK & GO & DB & B & RT & sIHS \\ 
  \hline
Student-t(6) & 1.20 & 1.58 & 1.31 & 1.39 & 1.35 & \textcolor{blue}{1.03} & 1.26 & \textcolor{blue}{1.03} \\ 
  Fr\'echet(2) & 1.08 & 1.26 & \textcolor{blue}{1.07} & 1.21 & 1.66 & 1.29 & 2.40 & 2.43 \\ 
  Cauchy & 1.34 & 1.41 & \textcolor{blue}{1.08} & 1.17 & 2.00 & 1.68 & 2.78 & 3.03 \\ 
  Loggamma & 1.08 & 1.10 & 1.32 & 1.17 & 1.19 & 1.05 & 1.40 & \textcolor{blue}{0.79} \\ 
  Burr(2,1) & \textcolor{blue}{1.07} & 1.29 & 1.62 & 1.14 & 1.63 & 1.29 & 2.21 & 1.79 \\ 
 negBias & \textcolor{blue}{0.98} & 1.12 & 1.54 & 1.10 & 1.30 & 1.04 & 2.08 & 3.98 \\ 
   \hline
\end{tabular}
\caption{Efficiency values $\mathrm{EFF}_{\gamma}$ based on 2000 samples if $n=500$ and on 500 samples if $n=5000$. Lowest (best) efficiency values are highlighted in blue. }\label{Tab:EFFg}
\end{table}

\begin{table}[h]
\centering
\begin{tabular}{rccccccc|c}
  \hline
 $n=500$  & SAM & GH & DK & GO & DB & B & RT & sIHS \\ 
  \hline
Student-t(6) & 1.09 & 2.30 & 1.23 & 1.60 & \textcolor{blue}{ 1.01} & 1.04 & 1.16 & 1.10 \\ 
  Fr\'echet(2) & 0.96 & 1.07 & 1.01 & 1.06 & 1.07  & 1.16 & 1.42 & \textcolor{blue}{0.83} \\ 
  Cauchy & 0.89 & 1.04 & 1.03 & 0.95 & 0.86  & 1.40 & 1.59 & \textcolor{blue}{0.65} \\ 
  Loggamma & 0.84 & 0.95 & 2.10 & 1.02 & 0.88 & 1.06 & 1.55 & \textcolor{blue}{0.50} \\ 
  Burr(2,1) & 0.79 & 2.15 & 8.60 & 0.98 & 0.71 & 1.43 & 3.19 & \textcolor{blue}{0.41} \\ 
  negBias & 1.66 & 1.37 & 2.32 & 2.13 & \textcolor{blue}{ 0.80} & 1.98 & 3.75 & 8.05 \\ 
   \hline
\end{tabular}
\vspace*{0.2cm}

\begin{tabular}{rccccccc|c}
  \hline
 $n=5000$ & SAM & GH & DK & GO & DB & B & RT & sIHS \\ 
  \hline
Student-t(6) & 1.07 & 1.39 & 1.16 & 1.29 & 1.44 & 1.02 & 1.24 & \textcolor{blue}{0.94} \\ 
  Fr\'echet(2) & \textcolor{blue}{1.04} & 1.14 & 1.07 & 1.15 & 1.14 & 1.14 & 1.53 & 1.14 \\ 
  Cauchy & \textcolor{blue}{1.01} & 1.13 & 1.03 & 1.04 & 1.16 & 1.22 & 1.55 & 1.10 \\ 
  Loggamma & 0.94 & 1.00 & 1.39 & 1.12 & 1.11 & 0.97 & 1.55 & \textcolor{blue}{0.67} \\ 
  Burr(2,1) & 1.00 & 1.38 & 1.24 & 1.11 & 1.09 & 1.11 & 1.72 & \textcolor{blue}{0.71} \\ 
  negBias &1.00 & 1.11 & 0.87 & 1.21 &  \textcolor{blue}{0.87} & 1.21 & 1.23 & 2.16 \\ 
   \hline
\end{tabular}
\caption{Efficiency values  $\mathrm{EFF}_{q}$ for $p=0.001$ based on 2000 samples if $n=500$ and on 500 samples if $n=5000$. Lowest (best) efficiency values are highlighted in blue.}\label{Tab:EFFq}
\end{table}

When looking at the results for estimating $\gamma$ adaptively for $n=500$ and $n=5000$ in Table \ref{Tab:EFFg}, we observe a very diverse picture of methods performing best. Overall  we get the impression that SAMSEE together with the approach by \cite{Goegebeur} performs most stable over the variety of distributions. This is interesting, because those are the methods which depend least on tuning parameters. The performance of the approaches GH, DK and B is comparable, but we obtain from Table \ref{Tab:EFFg} that on average over all distributions the SAMSEE procedure is superior.\\
For estimating a high quantile SAMSEE also performs convincingly, see Table \ref{Tab:EFFq}, but additionally sIHS and the approach by \cite{Danielsson} show very good efficiency values. They are closely followed by B and GO. Looking at the average performance over all distributions, SAMSEE performs best again. However, if we exclude the negBias distribution, sIHS works superior on average.\\
In conclusion, we can see that SAMSEE performs very efficiently and comparable to $k_{\mathrm{opt}}$ over all exemplary distributions. It works especially well for estimating a high quantile. Only in the case of estimating $\gamma$ for the Cauchy distribution it performs worse than DK and GO, but still better than most other approaches. Recalling the results of the simulation on the influence of $\rho$ in Table \ref{Tab:rho} in Section \ref{S:rho not -1}, it is not very surprising that SAMSEE performs slightly weaker in this situation. There, the Cauchy distribution is the only example we considered that benefits from estimating $\rho$ instead of fixing it to $-1$.\\
From Table \ref{Tab:EFFq} we furthermore observe that sIHS is a strong choice when estimating high quantiles from small samples with $n$ up to $5000$. However, the performance when estimating $\gamma$ is quite variable and it seems that sIHS does not perform particularly well for distributions with a small second order parameter ($\rho\leq-1$). This behaviour is already discussed in Section \ref{S:SKK-theory} and highlighted in Figure \ref{F:kSKK_over_Kopt}: sIHS selects smaller $k$ than optimal for the Hill estimator, especially if $\rho$ is in the regime between $-1$ and $-8$.\\
The reason why some approaches perform worse on quantiles than they do on $\gamma$ is that the estimator $\hat{q}_k(p)$ defined in (\ref{E:q_hat}) depends on $\hat{\gamma}_k$ in the exponent and is thus very sensitive to overestimation in case of $\gamma>1$. Hence, when estimating a high quantile, an estimate $\hat{\gamma}_k$ that is too large will lead to an even stronger overestimation of the quantile. This is why a few outliers among the $\gamma$ estimates can already cause much higher $\mathrm{EFF}_q$ values.\\

\section{Application to varying extreme value index}\label{S:app}

In this section we analyse our new procedures in a financial application, where we study operational losses of a bank. We are, of course, particularly interested in the distributional properties of very high losses. It has been discussed before that it is reasonable to assume the distribution of such extreme losses being heavy-tailed \citep{Valerie-loss, Moscadelli} and to change with the financial market over time \citep{Hambuckers, Cope}. In this context, we want to estimate how the extreme value index changes depending on the univariate covariate time. For this task, we utilize the approaches presented in Sections \ref{S:SKK} and \ref{S:estimateMSE} for locally optimal selection of a threshold.\\
The observations of interest are operational losses from the Italian bank UniCredit from 2005 to 2014. In \cite{Hambuckers} the data is analysed in a regularized generalized Pareto regression approach including several firm-specific, macroeconomic and financial indicators as covariates. This approach describes the dependence of the GPD parameters on various covariates via parametric functions. \\
We consider an easier and more direct approach to study the temporal dependence of the extreme value index without taking into account possible interference by other covariates. Our aim is to estimate the time dependent extreme value index $\gamma(t)$ non-parametrically with a simple ad hoc estimator that extends the estimator from \cite{deHaan-Chen} by employing our threshold selection procedures sIHS and SAMSEE. We present the estimator in Section \ref{S:app-t}\label{S:app-t} and the results we obtain when applying this estimator to the dataset of operational losses in Section \ref{S:app-data}.

\subsection{Estimating a varying extreme value index}\label{S:app-t}

In \cite{deHaan-Chen}, the authors already discussed estimating a trend in the extreme value index non-parametrically. They consider $n$ independent random variables $X_i\sim F_{i/n}$, where $F_s\in \mathrm{DoA}(G_{\gamma(s)})$ for $s\in[0,1]$. To address this problem, they introduce the following estimator for $\gamma(s)$, which locally applies the Hill estimator and is based on a global sample fraction $k$,
\begin{equation}\label{E:G(s)_k}
\hat{\gamma}_k(s):= \frac{1}{2kh} \sum_{i\in I_{n}(s)} \left(  \log X_i - \log X_{(\lfloor 2nh\rfloor-\lfloor 2kh\rfloor, \lfloor 2nh\rfloor )}  \right)^{+},
\end{equation}
where $I_n(s)$ is the $h$-neighbourhood of $s$, i.e.\ $I_n(s):=\{ i: |i/n-s|\leq h\}$. This estimator depends on the choice of the bandwidth $h$ and the global sample fraction $k$, which is then rescaled to $2kh$ for the individual regions $I_n(s)$. A small bandwidth $h$ leads to very high variability in $\hat{\gamma}_k(s)$ and a large value of $h$ smooths out all interesting features. Thus, the choice of $h$ should balance these two effects. \\
We suggest a modification of their estimator, where we locally estimate an optimal threshold $\hat{k}(s)$, i.e.
\begin{equation}\label{E:G(s)_khat}
\hat{\gamma}_{\hat{k}(s)}(s):= \frac{1}{\hat{k}(s)} \sum_{i\in I_{n}(s)} \left(  \log X_i - \log X_{(\lfloor 2nh\rfloor-\hat{k}(s), \lfloor 2nh\rfloor )}  \right)^{+}.
\end{equation}
To compare these two approaches, we repeat the simulation presented in Figure 2 (i) in \cite{deHaan-Chen} on samples of size $n=5000$ with $X_i\sim \text{Fr\'echet}(1/\gamma(i/n))$ and $\gamma(s)=1+s$. Figure \ref{F:av_frechet_gs} illustrates the benefits of locally optimizing the threshold via SAMSEE from Section \ref{S:estimateMSE}, as it strongly tightens the empirical confidence interval around the average, which is obtained from the $2.5$\% and $97.5$\% quantiles among $1000$ estimates.\\

\begin{figure}
\centering
\begin{minipage}[c]{0.5\textwidth}
	\includegraphics[scale=0.5]{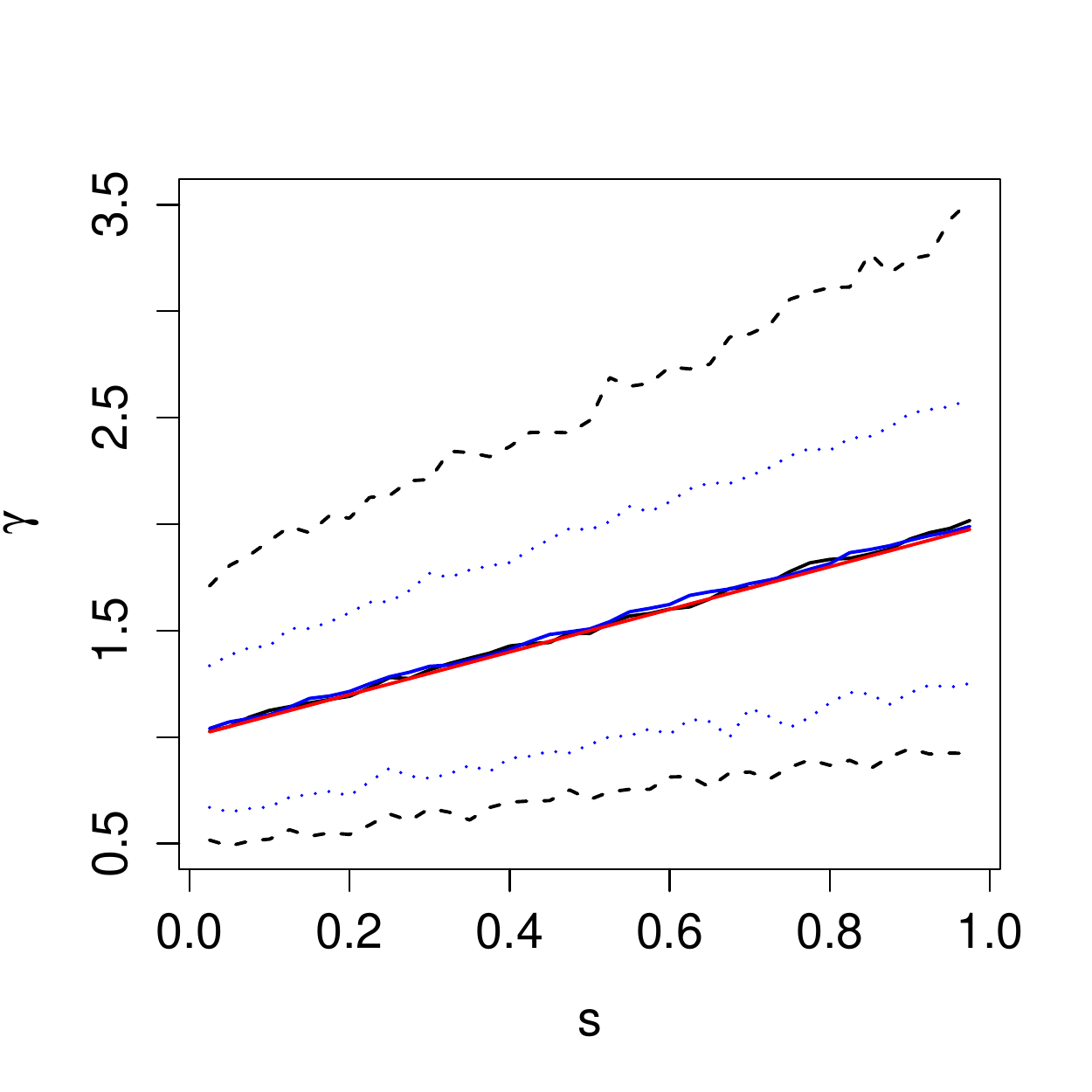}
  \end{minipage}\hfill
  \begin{minipage}[c]{0.48\textwidth}
\caption{The true extreme value index $\gamma(s)=1+s$ (red) next to the averaged estimators over 1000 samples. The estimator from \cite{deHaan-Chen} is in black and its empirical 95\% confidence interval is dashed. Our modified estimator employing SAMSEE is blue with dotted empirical confidence bounds.}\label{F:av_frechet_gs}
\end{minipage}
\end{figure}

\subsection{Functional extreme value index of operational losses}\label{S:app-data}

The operational losses in the dataset of UniCredit are grouped by the type of event that caused the specific loss. We consider the event type CPBP, which provides sufficient observations for our local estimation approach. The CPBP losses are caused by clients, products and business practices related to derivatives or other financial instruments.\\
First we want to test if the extreme value index is constant over time. Using the test T4 from \cite{Einmahl16}, we can reject the null hypotheses with a $p$-value that is virtually zero and thus are confident that the extreme value index of the losses is indeed varying over time.\\
We apply the new methodology from (\ref{E:G(s)_khat}) to these losses via estimating $\hat{k}$ with sIHS from Section \ref{S:SKK} and the SAMSEE approach from Section \ref{S:estimateMSE}. Figure \ref{F:CPBP_EFRAUD} shows the estimates we obtain for the event type CPBP. It is clearly visible that both procedures yield similar estimates for most time points and that the simple ad hoc estimators recover an increase of the severity of high losses during the financial and Euro crisis from 2008 to 2011. A similar overall trend in the extreme value index can also be identified in the estimates of \citet{Hambuckers} for CPBP. \\
For a more extensive discussion of the data and results of the more complex model including further covariates we refer to \cite{Hambuckers}.

\begin{figure}
\centering
\includegraphics[width=\textwidth]{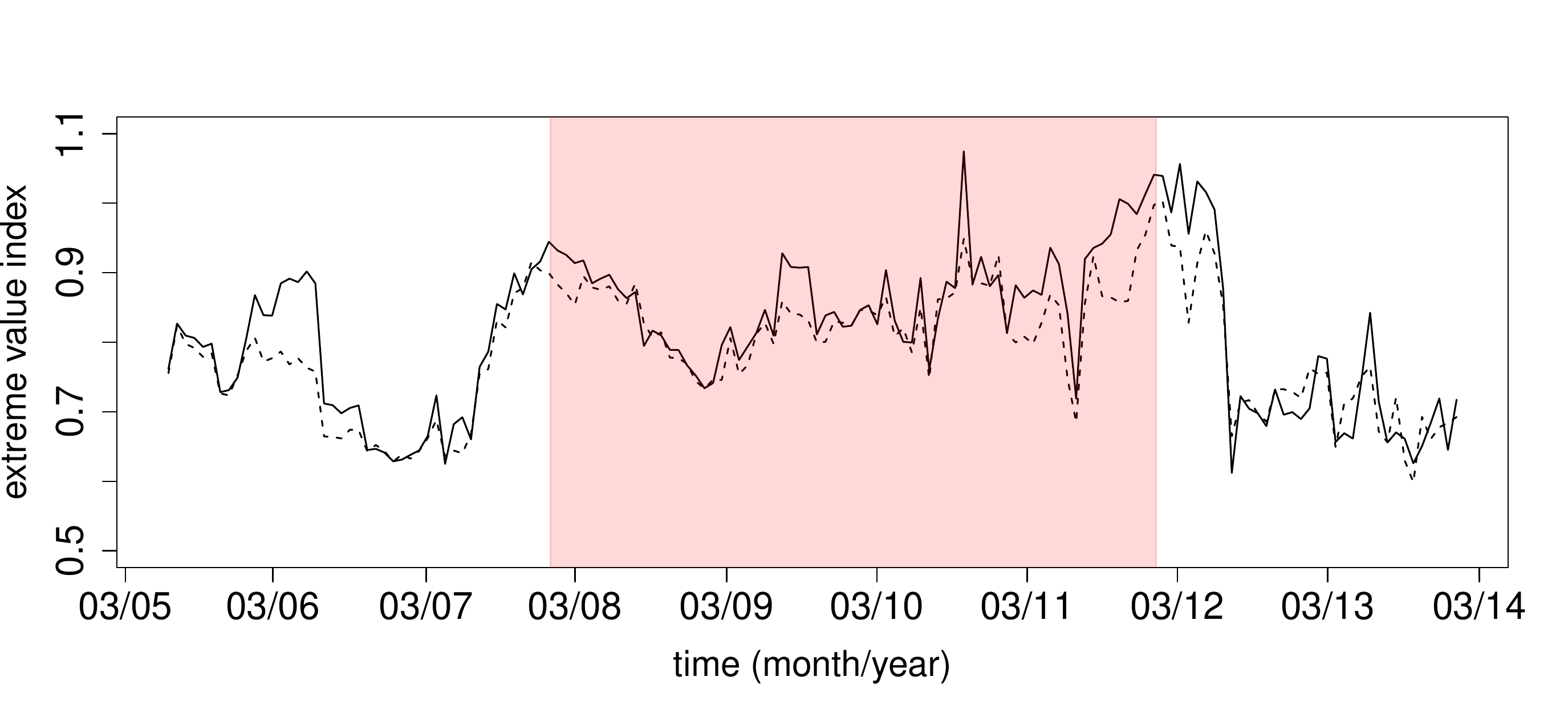}
\caption{The non-parametric estimate of the extreme value index of the operational losses of type CPBP is presented using $\hat{k}$ from SAMSEE (black) and sIHS (dashed) and bandwidth $h=0.05$. The red area indicates the time of the financial and Euro crisis.}\label{F:CPBP_EFRAUD}

\end{figure}

\appendix
\section{Theoretical results and proof of Theorem \ref{T:Bias}}\label{S:proofs}

\begin{lem}[Distribution of the Hill estimator]\label{L:Hill-dist}
Let $X_1,\dots,X_n\iid F$ for $F\in \mathrm{DoA}(G_{\gamma})$ with $\gamma>0$. Then the following distributional representation for the Hill estimator holds,
\begin{equation}
\hat{\gamma}_k = \frac{1}{k}\sum_{i=1}^k \log\left( \frac{X_{(n-i+1,n)}}{X_{(n-k,n)}} \right) \Deq G_k + b_{n,k}, \nonumber
\end{equation}
where $G_k\sim \Gamma(k,\gamma/k)$ and for $b_{n,k}$ it holds that
\begin{align*}
b_{n,k} \longrightarrow \begin{cases} 0, \text{ if } k/n\rightarrow0, \\ b_c, \text{ if } k/n\rightarrow c, \end{cases}  \text{ as } n\rightarrow\infty,
\end{align*}
for some $b_c\in\mathbb{R}$.
\end{lem}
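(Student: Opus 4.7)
The plan is to derive a distributional representation of $\hat{\gamma}_k$ from the tail quantile function $U(t)=F^{\leftharpoonup}(1-1/t)$, split it into an exponential part and a slowly-varying-function part, and analyse the two parts separately.

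First I would couple the sample to exponential order statistics. Since $F\in\mathrm{DoA}(G_\gamma)$ implies $U(x)=x^\gamma\ell_U(x)$ for some slowly varying $\ell_U$, and since $U$ is non-decreasing, the standard quantile transform gives
\begin{align*}
(X_{(1,n)},\dots,X_{(n,n)}) \;\Deq\; \bigl(U(e^{E_{(1,n)}}),\dots,U(e^{E_{(n,n)}})\bigr),
\end{align*}
where $E_{(1,n)}\leq\cdots\leq E_{(n,n)}$ are the order statistics of $n$ i.i.d.\ $\Exp(1)$ random variables. Taking logarithms and using $\log U(x)=\gamma\log x+\log\ell_U(x)$,
\begin{align*}
\hat{\gamma}_k \;\Deq\; \underbrace{\frac{\gamma}{k}\sum_{i=1}^{k}\bigl(E_{(n-i+1,n)}-E_{(n-k,n)}\bigr)}_{=:\,G_k} \;+\; \underbrace{\frac{1}{k}\sum_{i=1}^{k}\bigl[\log\ell_U(e^{E_{(n-i+1,n)}})-\log\ell_U(e^{E_{(n-k,n)}})\bigr]}_{=:\,b_{n,k}}.
\end{align*}

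Second, the distribution of $G_k$ follows from R\'enyi's representation for exponential order statistics: conditional on $E_{(n-k,n)}$, the upper excesses $E_{(n-i+1,n)}-E_{(n-k,n)}$, $i=1,\dots,k$, are jointly distributed as the order statistics of $k$ i.i.d.\ $\Exp(1)$ variables. Consequently $\sum_{i=1}^{k}(E_{(n-i+1,n)}-E_{(n-k,n)})\sim\Gamma(k,1)$, so $G_k\sim\Gamma(k,\gamma/k)$ as claimed, and $G_k$ is independent of $E_{(n-k,n)}$, which will be useful for the analysis of the remainder.

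Third, I would analyse $b_{n,k}$ using properties of slowly varying functions. If $k/n\to 0$, then $E_{(n-k,n)}\Parrow\infty$, and by uniform convergence of $\log\ell_U(\lambda t)-\log\ell_U(t)$ to $0$ on compact $\lambda$-sets (combined with Potter-type bounds to control the upper-order contributions where $E_{(n-i+1,n)}/E_{(n-k,n)}$ can be large), the differences in the defining sum vanish in probability, yielding $b_{n,k}\Parrow 0$. If $k/n\to c\in(0,1)$, then $E_{(n-k,n)}\Parrow-\log c$ and the empirical measure of $\{E_{(n-i+1,n)}:1\le i\le k\}$ converges weakly to the deterministic distribution of the upper $c$-fraction of $\Exp(1)$; plugging this into the continuous functional $\log\ell_U(e^{\cdot})-\log\ell_U(1/c)$ (which is bounded on the support in the limit) gives $b_{n,k}\Parrow b_c$ for an explicit constant $b_c$ obtained as an integral of $\log\ell_U$.

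The main obstacle is the second case: here $E_{(n-k,n)}$ does not diverge, so one cannot discard the slowly varying function by tail arguments, and the upper spacings range over an unbounded set as $i$ runs from $k$ down to $1$. I expect the cleanest route is to split the sum at an intermediate index $k_n\to\infty$ with $k_n/k\to 0$, treat the top-$k_n$ contribution via Potter bounds (noting that its weight $k_n/k\to 0$), and handle the bulk using weak convergence of the empirical measure of the exponential order statistics together with the Karamata representation of $\ell_U$ to produce the limit $b_c$.
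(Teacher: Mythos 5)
Your proposal is correct and follows essentially the same route as the paper: the same quantile-transform decomposition of $\hat{\gamma}_k$ into an exponential part (handled by R\'enyi's representation, giving $G_k\sim\Gamma(k,\gamma/k)$) and a slowly varying remainder $b_{n,k}$, analysed separately in the two regimes $k/n\to 0$ and $k/n\to c$. The only difference is cosmetic: the paper works with standard Pareto variables $P_i=e^{E_i}$ and disposes of $b_{n,k}$ in the $k/n\to c$ case by a direct appeal to the law of large numbers, identifying $b_c=\E[\log(\ell_U(P)/\ell_U(1/c))\mid P>1/c]$, whereas you supply somewhat more technical detail (Potter bounds and a truncation at an intermediate index) for the same conclusion.
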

\begin{proof}
From the first order condition  $$\underset{t\rightarrow\infty}{\lim}\ \frac{1-F(tx)}{1-F(t)}=x^{-1/\gamma}$$ follows that $U=F^{\leftharpoonup}(1-1/x)$ is regularly varying with index $\gamma$ and there exists a slowly varying function $\ell_U$, such that $U(x)=x^{\gamma} \ell_U(x)$. Let $P_1,P_2,\dots$ be i.i.d.\ random variables with distribution function $1-1/y$. Note that $U(P_i)\Deq X_i$. We define $Y_{(k-i,k)}:=\log(X_{(n-i,n)})-\log(X_{(n-k,n)})$, for which it follows that
\begin{align*}
Y_{(k-i,k)} \Deq \log\left( \frac{U(P_{(n-i,n)})}{U(P_{(n-k,n)})}\right) = \gamma\log\left( \frac{P_{(n-i,n)}}{P_{(n-k,n)}}\right) +\log\left( \frac{\ell_U(P_{(n-i,n)})}{\ell_U(P_{(n-k,n)})}\right).
\end{align*}
Note that $\log(P_i)$ is standard exponentially distributed. By Lemma 3.2.3 in \cite{deHaan-book} follows for i.i.d.\ standard exponential random variables $E_1, E_2,\dots$ that $\left\{ E_{(n-i,n)}-E_{(n-k,n)}\right\}_{i=1}^{k-1} \Deq \left\{ E_{(k-i,k)}\right\}_{i=0}^{k-1}.$\\
Hence, we obtain for the Hill estimator that
\begin{align*}
\hat{\gamma}_k \Deq \gamma \frac{1}{k}\sum_{i=0}^{k-1} E_{(k-i,k)} + \frac{1}{k}\sum_{i=0}^{k-1} \log\left( \frac{\ell_U(P_{(n-i,n)})}{\ell_U(P_{(n-k,n)})}\right)\Deq G_k + b_{n,k},
\end{align*}
where $G_k \sim \Gamma(k,\gamma/k)$ as the sum of i.i.d.\ exponentials and $b_{n,k}$ denotes the second average. \\
If $k/n\rightarrow0$, $P_{(n-k,n)}\rightarrow\infty$ almost surely by Lemma 3.2.1 in \cite{deHaan-book}. Since $\ell_U$ is slowly varying, $b_{n,k}$ converges to zero almost surely.\\
If $k/n\rightarrow c\in(0,1]$, $P_{(n-k,n)}\rightarrow 1/c$ in probability by Cor.\ 2.2.2 in \cite{deHaan-book}. Thus, by the weak law of large numbers
\begin{align*}
b_{n,k} \overset{\mathbb{P}}{\longrightarrow} \mathbb{E}\left[ \log\left( \frac{\ell_U(P)}{\ell_U(1/c)}\right) \Big| P>1/c  \right]=:b_c, \text{ as } n\rightarrow\infty.
\end{align*}
\end{proof}

\begin{lem}\label{L:SKK2_range_k}
Let $X_1,\dots,X_n\iid F$ for $F\in \mathrm{DoA}(G_{\gamma})$ with $\gamma>0$. Then the following holds for $\SKK=(4-k)/(2\hat{\gamma}_k k)$ and $\SKK_{-}=(4+k)/(2\hat{\gamma}_k k)$ depending on the sample fraction $k$.
\begin{enumerate}
\item If k is finite,
\begin{itemize}
\item then $\mathbb{E}[\SKK]+ (2\gamma)^{-1} \rightarrow \frac{3}{2\gamma(k-1)}>0$, as $n\rightarrow\infty$.
\item then $\mathbb{E}[\SKK^{-}]- (2\gamma)^{-1} \rightarrow \frac{5}{2\gamma(k-1)}>0$, as $n\rightarrow\infty$.
\end{itemize} 
\item If $k\rightarrow\infty,\ k/n\rightarrow0$,
\begin{itemize}
\item then $\mathbb{E}[\SKK]+ (2\gamma)^{-1} \rightarrow 0$, as $n\rightarrow\infty$.
\item then $\mathbb{E}[\SKK^{-}]- (2\gamma)^{-1} \rightarrow 0$, as $n\rightarrow\infty$.
\end{itemize} 
\item If $k\rightarrow\infty,\ k/n\rightarrow c>0$
\begin{itemize}
\item and $b_c\geq0$, then $\mathbb{E}[\SKK]+ (2\gamma)^{-1} \rightarrow \frac{b_c}{2\gamma(\gamma+b_c)}>0$, as $n\rightarrow\infty$.
\item and $b_c\leq 0$, then $\mathbb{E}[\SKK^{-}]- (2\gamma)^{-1} \rightarrow \frac{-b_c}{2\gamma(\gamma+b_c)}>0$, as $n\rightarrow\infty$.
\end{itemize} 
\end{enumerate}
\end{lem}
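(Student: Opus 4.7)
The plan is to reduce all six asymptotic statements to three limits of $\mathbb{E}[1/\hat\gamma_k]$. Writing
\begin{align*}
\mathbb{E}[\SKK]+\frac{1}{2\gamma}=\frac{4-k}{2k}\,\mathbb{E}\!\left[\frac{1}{\hat\gamma_k}\right]+\frac{1}{2\gamma},\quad \mathbb{E}[\SKK^{-}]-\frac{1}{2\gamma}=\frac{4+k}{2k}\,\mathbb{E}\!\left[\frac{1}{\hat\gamma_k}\right]-\frac{1}{2\gamma},
\end{align*}
and invoking Lemma \ref{L:Hill-dist} to represent $\hat\gamma_k\Deq G_k+b_{n,k}$ with $G_k\sim\Gamma(k,\gamma/k)$, the standard Gamma-moment identity gives $\mathbb{E}[1/G_k]=k/(\gamma(k-1))$ for $k>1$. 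The three cases then become a limit-and-simplify exercise.

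For part 1 ($k$ fixed, $n\to\infty$), Lemma \ref{L:Hill-dist} shows $b_{n,k}\to 0$ almost surely, so $\hat\gamma_k\to G_k$ almost surely; uniform integrability (available because $\mathbb{E}[G_k^{-1-\varepsilon}]<\infty$ whenever $k>1+\varepsilon$) then yields $\mathbb{E}[1/\hat\gamma_k]\to k/(\gamma(k-1))$. Substituting into the two identities above produces exactly $3/(2\gamma(k-1))$ and $5/(2\gamma(k-1))$, both strictly positive. For part 2 ($k\to\infty$, $k/n\to 0$), the same vanishing of $b_{n,k}$ combined with $k/(\gamma(k-1))\to 1/\gamma$ gives $\mathbb{E}[1/\hat\gamma_k]\to 1/\gamma$; since $(4\pm k)/(2k)\to\pm 1/2$, both expressions tend to $0$.

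For part 3 ($k/n\to c\in(0,1]$), the law of large numbers applied to $G_k\sim\Gamma(k,\gamma/k)$ (with $\Var(G_k)=\gamma^2/k\to 0$) together with $b_{n,k}\to b_c$ in probability gives $\hat\gamma_k\to\gamma+b_c$ in probability, whence $\mathbb{E}[1/\hat\gamma_k]\to 1/(\gamma+b_c)$. A short algebraic step then produces $b_c/(2\gamma(\gamma+b_c))$ and $-b_c/(2\gamma(\gamma+b_c))$ respectively, each strictly positive exactly under the stated sign condition on $b_c$ (note $\gamma+b_c>0$ is forced by $\hat\gamma_k\ge 0$).

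The main obstacle is the interchange of limit and expectation, since $1/\hat\gamma_k$ is unbounded near the origin and $G_k,b_{n,k}$ are not independent. The cleanest route is a Vitali-type argument: fix some $p>1$ slightly larger than $1$, bound the $L^p$-norm of $1/G_k$ via the Gamma density (finite as soon as $k>p$), and verify that on a set of probability tending to $1$ the perturbation $|b_{n,k}|$ is much smaller than $G_k$ so that $1/\hat\gamma_k\le 2/G_k$; a deterministic safety bound using the positivity of log-spacings takes care of the complementary event. Once this uniform integrability is in place, convergence in probability of $\hat\gamma_k$ upgrades to convergence of $\mathbb{E}[1/\hat\gamma_k]$ in all three regimes, and the rest is arithmetic.
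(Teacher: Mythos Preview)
Your proposal is correct and follows essentially the same route as the paper: both reduce each case to the limiting behaviour of $\mathbb{E}[1/\hat\gamma_k]$ via the representation $\hat\gamma_k\Deq G_k+b_{n,k}$ from Lemma~\ref{L:Hill-dist}, use the Gamma inverse moment $\mathbb{E}[1/G_k]=k/(\gamma(k-1))$ in case~1, and use consistency of $\hat\gamma_k$ (towards $\gamma$ or $\gamma+b_c$) in cases~2 and~3. The paper's argument is in fact terser than yours---it simply asserts the convergence of expectations (invoking only the continuous mapping theorem for case~2)---whereas you explicitly flag and sketch the uniform-integrability step needed to pass from convergence in distribution or probability of $\hat\gamma_k$ to convergence of $\mathbb{E}[1/\hat\gamma_k]$; this extra care is appropriate and does not change the overall strategy.
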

\begin{proof}
We proof these three alternative statements to obtain that the minimizing sequence $k=k_n$ is an intermediate sequence.
\vspace*{-0.2cm}
\begin{enumerate}
\item Let $k$ be finite, by Lemma \ref{L:Hill-dist} holds that $\hat{\gamma}_k\overset{\mathcal{D}}{=} G_k +b_{k,n}$, where $G_k\sim \Gamma(k,\gamma/k)$ and $b_{k,n}\rightarrow 0$ as $n\rightarrow \infty$. Thus, as $n\rightarrow\infty$ it follows that
\begin{align*}
\mathbb{E}[\SKK]+ (2\gamma)^{-1} \rightarrow \mathbb{E}\left[ \frac{4-k}{2G_k k} + \frac{1}{2\gamma} \right]= \frac{3}{2\gamma(k-1)},\\
\mathbb{E}[\SKK^{-}]- (2\gamma)^{-1} \rightarrow \mathbb{E}\left[ \frac{4+k}{2G_k k} - \frac{1}{2\gamma} \right]= \frac{5}{2\gamma(k-1)}.
\end{align*}
\item The statement follows from the consistency of the Hill estimator and the continuous mapping theorem.
\item Let $k\rightarrow\infty$ and $k/n\rightarrow c>0$, then it holds by Lemma \ref{L:Hill-dist} that $\hat{\gamma}_k\overset{\mathcal{D}}{=} G_k +b_{k,n}$, where $G_k\overset{\mathbb{P}}{\rightarrow} \gamma$ and $b_{k,n}\rightarrow b_c$ as $n\rightarrow \infty$. Thus, as $n\rightarrow\infty$
\begin{align*}
\mathbb{E}[\SKK]+ (2\gamma)^{-1}  \rightarrow \frac{-1}{2(\gamma+b_c)} + \frac{1}{2\gamma}  = \frac{b_c}{2\gamma(\gamma + b_c)}, \text{ if } b_c>0,\\
\mathbb{E}[\SKK^{-}]- (2\gamma)^{-1}  \rightarrow \frac{1}{2(\gamma+b_c)} - \frac{1}{2\gamma}  = \frac{-b_c}{2\gamma(\gamma + b_c)}, \text{ if } b_c<0.
\end{align*}
\end{enumerate}
\end{proof}

\begin{lem}\label{L:PQR}
Let $E_1,\dots,E_n$ be i.i.d.\ standard exponential random variables and $k$ s.t.\ $1\leq k\leq n$ and $k\rightarrow\infty$ as $n\rightarrow\infty$. We define the following random variables,
\begin{align*}
&P_{k,n} := \sqrt{k} \left( \frac{1}{k}\sum_{i=1}^k E_i -1 \right), \\
&Q_{k,n} := \sqrt{k} \left( \frac{1}{k}\sum_{i=1}^k E_i^2 -2 \right), \\
&R_{k,n} := \sqrt{k} \left( \frac{1}{k}\sum_{i=1}^k e_{i+1}^k E_i -1 \right), \\
\end{align*}
where $e_i^k:=\sum_{l=i}^k l^{-1}=\mathbb{E}[E_{(k-i+1,k)}]$. Then it holds for $n\rightarrow\infty$ that
\begin{align*}
(P_{k,n}, Q_{k,n}, R_{k,n})^{T} \Darrow \mathcal{N}\left( (0,0,0)^{T}, \begin{pmatrix}
1 & 4 & 1 \\
4 & 20 & 4 \\
1 & 4 & 2 \\
\end{pmatrix}  \right).
\end{align*}
\end{lem}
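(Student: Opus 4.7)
The plan is the Cramér--Wold device. For arbitrary $(a,b,c)\in\mathbb{R}^3$, set
\[
S_{k,n} := a P_{k,n} + b Q_{k,n} + c R_{k,n},
\]
and the goal reduces to showing $S_{k,n}\Darrow \mathcal{N}(0,\sigma^2)$ with $\sigma^2 = (a,b,c)\,\Sigma\,(a,b,c)^T$ for $\Sigma$ the covariance matrix stated in the lemma.

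First, I would rewrite $S_{k,n}$ as $k^{-1/2}$ times a sum of independent, centred random variables plus a negligible deterministic bias. Since $\E[E_i]=1$ and $\E[E_i^2]=2$, the statistics $P_{k,n}$ and $Q_{k,n}$ are already sums of centred i.i.d.\ terms. For $R_{k,n}$ I use the combinatorial identity
\[
\frac{1}{k}\sum_{i=1}^k e_{i+1}^k \;=\; \frac{1}{k}\sum_{l=2}^k \frac{l-1}{l} \;=\; 1 - \frac{H_k}{k},
\]
obtained by swapping the order of summation (with $H_k=\sum_{l=1}^k 1/l$), which yields
\[
R_{k,n} \;=\; \frac{1}{\sqrt{k}}\sum_{i=1}^k e_{i+1}^k (E_i-1) \;-\; \frac{H_k}{\sqrt{k}},
\]
where $H_k/\sqrt{k}=o(1)$ and therefore is absorbed by Slutsky at the end. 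Defining
\[
V_{k,i} := a(E_i-1) + b(E_i^2-2) + c\, e_{i+1}^k (E_i-1),
\]
one has $S_{k,n} = k^{-1/2}\sum_{i=1}^k V_{k,i} + o(1)$ with the $V_{k,i}$ independent (each depends only on $E_i$) and centred.

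Second, I would verify a Lyapunov CLT ($\delta=2$) for the triangular array $\{k^{-1/2}V_{k,i}\}$. Using $e_{i+1}^k\le H_k = O(\log k)$ uniformly in $i$, one gets $\E[V_{k,i}^4]\le C(a,b,c)\bigl(1+(\log k)^4\bigr)$, hence
\[
\frac{1}{k^2}\sum_{i=1}^k \E[V_{k,i}^4] \;=\; O\!\left(\frac{(\log k)^4}{k}\right) \longrightarrow 0,
\]
so Lyapunov's condition holds. Third, I would identify the limiting variance. Expanding $V_{k,i}^2$ and using $\Var(E_i)=1$, $\Var(E_i^2)=\E[E_i^4]-4=20$, and $\Cov(E_i,E_i^2)=\E[E_i^3]-2=4$ yields
\[
\Var(V_{k,i}) \;=\; a^2 + 20 b^2 + 8ab + c^2(e_{i+1}^k)^2 + 2(a+4b)c\, e_{i+1}^k.
\]
Averaging over $i$, everything reduces to the two asymptotic identities
\[
\frac{1}{k}\sum_{i=1}^k e_{i+1}^k \longrightarrow 1, \qquad \frac{1}{k}\sum_{i=1}^k (e_{i+1}^k)^2 \longrightarrow 2.
\]
The first is immediate from the combinatorial identity above. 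The second follows by approximating $e_{i+1}^k = \log(k/i) + O(1/i)$ and recognising the resulting Cesàro average as a Riemann sum for $\int_0^1 (\log(1/x))^2\,dx = 2$. Collecting terms gives $k^{-1}\sum_i \Var(V_{k,i}) \to a^2+20b^2+2c^2+8ab+2ac+8bc$, which is exactly the quadratic form $(a,b,c)\,\Sigma\,(a,b,c)^T$.

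The main obstacle is the Riemann-sum step for $k^{-1}\sum(e_{i+1}^k)^2 \to 2$, because the integrand $(\log(1/x))^2$ has a mild singularity at $0$: one must verify that small-$i$ corrections do not spoil the limit. The $i=1$ term contributes $(\log k)^2/k \to 0$; for $2\le i\le \epsilon k$ a crude bound by $\int_0^\epsilon(\log(1/x))^2\,dx$ gives a uniformly small error, and for $i\ge \epsilon k$ the classical Riemann-sum argument applies. Once these pieces are in place, Slutsky absorbs the $-cH_k/\sqrt{k}$ term and Cramér--Wold delivers the joint weak convergence.
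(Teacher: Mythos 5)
Your proposal is correct and follows essentially the same route as the paper: Cram\'er--Wold reduction, a Lyapunov CLT for the triangular array of independent centred summands, and the two Ces\`aro limits $k^{-1}\sum_i e_{i+1}^k\to 1$ and $k^{-1}\sum_i (e_{i+1}^k)^2\to 2$ to identify the quadratic form $(a,b,c)\,\Sigma\,(a,b,c)^T$. The only differences are cosmetic refinements on your side: you use a fourth-moment Lyapunov condition where the paper uses third moments, and your exact identity $\sum_{i=1}^k e_{i+1}^k=k-H_k$ makes the $o(1)$ centring error $H_k/\sqrt{k}$ explicit, whereas the paper only invokes the weaker integral approximation $k^{-1}\sum_i e_{i+1}^k\to 1$.
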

\begin{proof}
We use the Cram\'er-Wold device that gives us a joint normal limit distribution if all linear combinations have an univariate normal limit distribution. For $a_1, a_2, a_3 \in \mathbb{R}$ we study
\begin{equation}\label{E:lin_comb_PQR}
 \left(  a_1 P_{k,n} + a_2 Q_{k,n} + a_3 R_{k,n}  \right).
\end{equation}
To prove asymptotic normality for the sum in (\ref{E:lin_comb_PQR}) we use Liapounov's central limit theorem (CLT), see Theorem 7.1.2.\ in \cite{Chung-book}. We consider a sum $S_n:=\sum_{i=1}^k X_{i,k}$ of independent random variables fulfilling the following three conditions,
\begin{enumerate}
\item[1)] $\mathbb{E}[X_{i,k}]=0, \ \forall k\ \forall i$,
\item[2)] $\sum_{i=1}^k \mathrm{Var}(X_{i,k}) = \sigma^2 $,
\item[3)] $\Gamma(k)=\sum_{i=1}^k \mathbb{E}[|X_{i,k}|^3] \longrightarrow 0,\ \text{as } k\rightarrow\infty,$
\end{enumerate}
Then the CLT proves a standard normal limit for $S_n$. We define
\begin{equation*}
X_{i,k}:= \frac{1}{\sqrt{k}} \left( a_1 E_i - a_1 + a_2 E_i^2 - 2a_2 + a_3e_{i+1}^k E_i -a_3 e_{i+1}^k \right)
\end{equation*}
for $i=1,\dots,k$ where $e_{k+1}^k:=0$, such that $\sum_{i=1}^k X_{i,k} \approx$(\ref{E:lin_comb_PQR}), where $c_k\approx c$ if $c_k\rightarrow c$ as $k\rightarrow\infty$ and the approximation error is due to
\begin{align*}
\frac{1}{k}\sum_{i=1}^{k-1} e_{i+1}^k = \frac{1}{k}\sum_{i=1}^{k-1} \frac{1}{k}\sum_{l=i+1}^{k} \frac{1}{l/k} \ \approx \ \int_{\frac{1}{k}}^1 \int_v^1 \frac{1}{u} \mathrm{d}u \mathrm{d}v = -\int_{\frac{1}{k}}^1 \log(v)\mathrm{d}v \underset{k\rightarrow\infty}{\rightarrow}  1.
\end{align*}
Now we have to check the three conditions. Condition 1) follows immediately from $\mathbb{E}[E_i]=1$ and $\mathbb{E}[E_i^2]=2$. For condition 2) we need to calculate the variance
\begin{align*}
\mathrm{Var}(X_{i,k}) &= \mathrm{Var}\left( \frac{1}{\sqrt{k}} \left( a_1 E_i - a_1 + a_2 E_i^2 - 2a_2 + a_3e_i^k E_i -a_3 e_i^k \right) \right)\\
&= \frac{1}{k}\Big(  a_1^2\mathrm{Var}(E_i) +a_2^2\mathrm{Var}(E_i^2) + a_3^2(e_{i+1}^k)^2 \mathrm{Var}(E_i)\\
&\qquad  + 2(a_1a_2+a_2a_3e_{i+1}^k) \Cov(E_i,E_i^2)+ 2a_1a_3 \mathrm{Var}(E_i) \Big)\\
&= \frac{1}{k}\Big(  a_1^2 + 20a_2^2 + a_3^2(e_{i+1}^k)^2 + 8(a_1a_2+a_2a_3e_{i+1}^k)+ 2a_1a_3  \Big),
\end{align*}
since $\mathrm{Var}(E_i)=1$, $\mathrm{Var}(E_i^2)=20$ and $\Cov(E_i,E_i^2)=\mathbb{E}[E_i^3]-\mathbb{E}[E_i^2]\mathbb{E}[E_i]=4$. With the approximation
\begin{align*}
\frac{1}{k}\sum_{i=1}^{k-1}( e_{i+1}^k )^2\ \approx \ \int_{\frac{1}{k}}^1 \Big( \int_v^1  \frac{1}{u} \mathrm{d}u \Big)^2 \mathrm{d}v \underset{k\rightarrow\infty}{\rightarrow}  2
\end{align*}
follows that
\begin{equation*}
\sum_{i=1}^k \mathrm{Var}(X_{i,k}) = a_1^2 + 20a_2^2 + 2a_3^2 + 8a_1a_2 + 8a_2a_3 + 2a_1a_3.
\end{equation*}
Condition 3) holds with 
\begin{equation}
\Gamma(k)= \frac{1}{k\sqrt{k}} \sum_{i=1}^k \mathbb{E}\Big[\big|(a_1+a_3e_i^k)(E_i-1)+a_2(E_i^2-2)\big|^3\Big] = \frac{c}{\sqrt{k}}\rightarrow 0,
\end{equation}
as $k\rightarrow\infty$ and for a constant $c>0$, since the exponential distribution has finite moments and $\sum_{i=1}^k (e_{i+1}^k)^3/k \approx 6$. Thus, we obtain that
\begin{equation*}
\sum_{i=1}^k X_{i,k} \Darrow \mathcal{N}(0,\ a_1^2 + 20a_2^2 + 2a_3^2 + 8a_1a_2 + 8a_2a_3 + 2a_1a_3 ).
\end{equation*}
This is the limiting distribution of the sum in (\ref{E:lin_comb_PQR}) and also follows from the joint normal distribution.
\end{proof}

\begin{lem}\label{L:distributional_representaions}
Let $X_1,\dots,X_n\iid F$ for $F\in \mathrm{DoA}(G_{\gamma})$ with $\gamma>0$ and $P_1,P_2,\dots$ be i.i.d.\ random variables with distribution function $1-1/y$. We define
\begin{align*}
\hat{\gamma}_k &:=  \frac{1}{k}\sum_{i=1}^k \log\left(\frac{X_{(n-i+1,n)}}{X_{(n-k,n)}}  \right),  \quad M_n:=  \frac{1}{k}\sum_{i=1}^k \log\left(\frac{X_{(n-i+1,n)}}{X_{(n-k,n)}}  \right)^2, \\
\overline{YE} &:=  \frac{1}{k}\sum_{i=1}^k \log\left(\frac{X_{(n-i+1,n)}}{X_{(n-k,n)}}  \right)  e_{i}^k , \text{ and}\quad e_i^k:=\sum_{l=i}^k \frac{1}{l}.
\end{align*}
If the second order condition
\begin{align*}
\underset{t\rightarrow\infty}{\lim} \frac{\frac{U(tx)}{U(t)}-x^{\gamma}}{A(t)} =x^{\gamma}\frac{x^{\rho}-1}{\rho}
\end{align*}
holds for $\rho<0$ and $x>0$, then
\begin{align}
&\hat{\gamma}_k \Deq \gamma + \gamma P_{k,n}/\sqrt{k} + \frac{A(Y_{(n-k,n)})}{1-\rho} + o_p(A(n/k)),\label{E:Hill-dist}\\
&M_n \Deq 2\gamma^2 + \gamma^2 Q_{k,n}/\sqrt{k} + \frac{2\gamma(2-\rho)}{(1-\rho)^2}A(Y_{(n-k,n)}) + o_p(A(n/k)),\label{E:Mn-dist}\\
&\overline{YE} \Deq 2\gamma + \gamma( P_{k,n} + R_{k,n})/\sqrt{k} + \frac{2-\rho}{(1-\rho)^2}A(Y_{(n-k,n)}) + o_p(A(n/k)).\label{E:YE-dist}
\end{align}
\end{lem}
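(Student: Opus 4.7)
The overall strategy is to reduce the three statistics to functionals of i.i.d.\ standard exponentials via two classical tools. Since $F \in \mathrm{DoA}(G_\gamma)$, we may write $X_i \Deq U(P_i)$ with $P_i \sim 1 - 1/y$, and then apply the second order condition to $U(tx)/U(t)$ with $t = P_{(n-k,n)}$ and $x = P_{(n-i+1,n)}/P_{(n-k,n)}$. Taking logarithms and using $\log(1+y) = y + O(y^2)$ gives the basic expansion
\begin{equation*}
Y_i := \log\!\left(X_{(n-i+1,n)}/X_{(n-k,n)}\right) \Deq \gamma E_{(k-i+1,k)} + \frac{A(P_{(n-k,n)})}{\rho}\!\left(e^{\rho E_{(k-i+1,k)}} - 1\right) + o_p(A(n/k)),
\end{equation*}
where by R\'enyi's representation the spacings $\log(P_{(n-i+1,n)}/P_{(n-k,n)})$ have the same joint law as the order statistics $(E_{(k-i+1,k)})_{i=1}^k$ of $k$ i.i.d.\ $\mathrm{Exp}(1)$ variables, and by the second R\'enyi representation $E_{(k-i+1,k)} \Deq \sum_{j=i}^{k} E_j^*/j$ for another i.i.d.\ exponential sequence.

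For \eqref{E:Hill-dist}, averaging the linear term and swapping summations yields $\frac{1}{k}\sum_i E_{(k-i+1,k)} = \frac{1}{k}\sum_j E_j^* = 1 + P_{k,n}/\sqrt{k}$. The bias average $\frac{1}{k}\sum_i (e^{\rho E_{(k-i+1,k)}} - 1)/\rho$ converges in probability to $\int_0^1 (u^{-\rho} - 1)/\rho\,du = 1/(1-\rho)$ by approximating $E_{(k-i+1,k)}$ with its quantile $-\log(i/k)$ via a Glivenko-Cantelli type argument for the empirical quantile process of exponentials. For \eqref{E:Mn-dist}, squaring produces a pure quadratic term, a cross term, and an $O_p(A^2) = o_p(A)$ remainder; the pure quadratic $\gamma^2 \cdot \frac{1}{k}\sum_i E_{(k-i+1,k)}^2$ equals $\gamma^2 \cdot \frac{1}{k}\sum_j (E_j^*)^2 = 2\gamma^2 + \gamma^2 Q_{k,n}/\sqrt{k}$ because the sum is invariant under permutation, and the cross term uses the integral
\begin{equation*}
\int_0^1 (-\log u)\,\frac{u^{-\rho} - 1}{\rho}\,du = \frac{2-\rho}{(1-\rho)^2},
\end{equation*}
obtained by the substitution $v = -\log u$ and reduction to $\int_0^\infty v e^{-(1-\rho)v}dv = (1-\rho)^{-2}$.

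For \eqref{E:YE-dist}, the key combinatorial identity is $\sum_{i=1}^{j} e_i^k = j(1 + e_{j+1}^k)$, which follows by swapping the order of summation in $\sum_{i \le j}\sum_{l \ge i} 1/l$ and separating $l \le j$ from $l > j$. Combined with R\'enyi this gives
\begin{equation*}
\frac{1}{k}\sum_{i=1}^k E_{(k-i+1,k)}\, e_i^k = \frac{1}{k}\sum_{j=1}^k E_j^*(1 + e_{j+1}^k) = 2 + (P_{k,n} + R_{k,n})/\sqrt{k},
\end{equation*}
so the linear part of $\overline{YE}$ has the asserted form; the bias factor $\frac{1}{k}\sum_i e_i^k (e^{\rho E_{(k-i+1,k)}} - 1)/\rho$ converges to $(2-\rho)/(1-\rho)^2$ using $e_i^k \to -\log(i/k)$ and the same integral as in the $M_n$ case.

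The main obstacle will be the uniform control of the second order remainder, needed both to justify the expansion of $Y_i$ and to ensure that the bias averages approach their integral limits with only $o_p(A(n/k))$ error -- especially for the weighted average in $\overline{YE}$, where the weight $e_i^k$ blows up near $i=1$. I would address this via a Drees-type uniform inequality of the form $|U(tx)/U(t) - x^\gamma - A(t) x^\gamma (x^\rho - 1)/\rho| \le \varepsilon A(t) x^{\gamma+\rho+\varepsilon}$ for $x \ge 1$ and $t$ large, together with moment bounds on $\frac{1}{k}\sum_i (e_i^k)^2 e^{2(\rho+\varepsilon) E_{(k-i+1,k)}}$, and finally replace $A(P_{(n-k,n)})$ by $A(n/k)$ using the regular variation of $|A|$ combined with the intermediate-sequence behaviour $P_{(n-k,n)}/(n/k) \Parrow 1$.
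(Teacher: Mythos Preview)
Your proposal is correct and follows essentially the same route as the paper. The paper simply cites de Haan and Peng for \eqref{E:Hill-dist} and \eqref{E:Mn-dist} and proves only \eqref{E:YE-dist}; for the latter it invokes exactly the Drees-type uniform bound you anticipate (Theorem~B.2.18 in de Haan and Ferreira), derives the same combinatorial identity $\sum_{i\le j} e_i^k = j(1+e_{j+1}^k)$ via the same swap of summations, and evaluates the same bias integral $\int_0^1 (-\log v)\,(v^{-\rho}-1)/\rho\,dv = (2-\rho)/(1-\rho)^2$. One small notational caution: in your $M_n$ step, the permutation-invariance argument gives $\sum_i E_{(k-i+1,k)}^2 = \sum_j E_j^2$ in terms of the \emph{unordered} exponentials, not the R\'enyi variables $E_j^*$; this is harmless for the marginal statement here, but keep the distinction in mind if you later need the joint law of $(P_{k,n},Q_{k,n},R_{k,n})$ from Lemma~\ref{L:PQR}.
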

\begin{proof}
The results in (\ref{E:Hill-dist}) and (\ref{E:Mn-dist}) are already stated in the proof of Theorem 1 in \cite{deHaan-Peng}.\\
To prove (\ref{E:YE-dist}) we follow the proof of the asymptotic normality of the Hill estimator in \cite{deHaan-book}. Let $A_0$ be such that $A(t)/A_0(t)\rightarrow1$, as $t\rightarrow\infty$. Then, for each $\epsilon>0$ there exists a $t_0$ such that for $t\geq t_0$ and $x\geq1$ the inequality in Theorem B.2.18 in \cite{deHaan-book} holds. For $t=P_{n-k,n}$ and $x=P_{(n-i,n)}/P_{(n-k,n)}$ we obtain that
\begin{align*}
\overline{YE} &\Deq \frac{\gamma}{k} \sum_{i=1}^k \log\left( \frac{P_{(n-i+1,n)}}{P_{(n-k,n)}} \right)e_{i}^k + A_0(P_{(n-k,n)}) \frac{1}{k}\sum_{i=1}^k \frac{\left( \frac{P_{(n-i+1,n)}}{P_{(n-k,n)}} \right)^{\rho}-1}{\rho} e_{i}^k\\
&\qquad +o_p(1)|A_0(P_{(n-k,n)})|\frac{1}{k}\sum_{i=1}^k\left( \frac{P_{(n-i+1,n)}}{P_{(n-k,n)}} \right)^{\rho+\epsilon} e_{i}^k.
\end{align*}
The second term can be approximated by 
\begin{align*}
\frac{1}{k}\sum_{i=1}^k \frac{\left( \frac{P_{(n-i+1,n)}}{P_{(n-k,n)}} \right)^{\rho}-1}{\rho} e_{i}^k \underset{}{\rightarrow} \int_0^1 \frac{v^{-\rho}-1}{\rho} \int_v^1 \frac{1}{u} \mathrm{d}u \mathrm{d}v = \frac{2-\rho}{(1-\rho)^2},
\end{align*}
and for the third term holds
\begin{align*}
 \frac{1}{k}\sum_{i=1}^k \left( \frac{P_{(n-i+1,n)}}{P_{(n-k,n)}} \right)^{\rho+\epsilon} e_{i}^k \underset{}{\rightarrow} \int_0^1 v^{-\rho-\epsilon} \int_v^1 \frac{1}{u} \mathrm{d}u \mathrm{d}v = \frac{1}{(1-\rho-\epsilon)^2},
\end{align*}
as $k\rightarrow\infty$.
Note that for $E_1,\dots,E_n$ i.i.d.\ standard exponential random variables follows by R\'enyi's representation that
\begin{align*}
\left\{ \log\left(\frac{P_{(n-i+1,n)}}{P_{(n-k,n)}}  \right) \right\}_{i=1}^k \Deq \left\{ E_{(k-i+1,k)} \right\}_{i=1}^k \Deq \left\{\sum_{j=i}^{k} \frac{E_j}{j} \right\}_{i=1}^k.
\end{align*}
This distributional equality enables the following transformations,
\begin{align*}
 &\sum_{i=1}^k \log\left( \frac{P_{(n-i+1,n)}}{P_{(n-k,n)}} \right)e_{i}^k \Deq  \sum_{i=1}^k  e_{i}^k \sum_{j=i}^{k} \frac{E_j}{j} \\
&\qquad \qquad = \sum_{i=1}^k \frac{E_i}{i} \sum_{j=1}^{i} e_j^k =  \sum_{i=1}^k \frac{E_i}{i} \Big(  i e_{i+1}^k + \sum_{j=1}^i \sum_{l=j}^i \frac{1}{l}\Big)\\
&\qquad \qquad =  \sum_{i=1}^k \frac{E_i}{i} \Big( i+ i e_{i+1}^k\Big)=  \sum_{i=1}^k  E_i +  \sum_{i=1}^k E_i e_{i+1}^k.
\end{align*}
Thus,
\begin{align*}
\frac{\gamma}{k} \sum_{i=1}^k \log\left( \frac{P_{(n-i+1,n)}}{P_{(n-k,n)}} \right)e_{i}^k &\Deq 2\gamma + \gamma \bigg(\Big(\frac{1}{k}  \sum_{i=1}^k  E_i -1 \Big)+\Big( \frac{1}{k}\sum_{i=1}^k E_i e_{i+1}^k  -1 \Big)    \bigg)\\
&=2\gamma +\gamma\left( P_{k,n} + R_{k,n} \right)/\sqrt{k}.
\end{align*}
Combining the above arguments as in the proof of Theorem 3.2.5 in \cite{deHaan-book} gives (\ref{E:YE-dist}).
\end{proof}

\begin{lem}\label{L:Cov_K_k}
For $k\rightarrow\infty$, $k/n\rightarrow0$ and $k/K\rightarrow c$ with $0<c<1$,
\begin{align*}
\Cov(R_{K,n},R_{k,n})\, &\rightarrow \, \frac{2c-c\log(c)}{\sqrt{c}},\\
\Cov(R_{K,n},P_{k,n})\, &\rightarrow \, \frac{c-c\log(c)}{\sqrt{c}}, \text{ as } n \rightarrow\infty,
\end{align*}
where $R_{k,n}$ and $P_{k,n}$ are defined in Lemma \ref{L:PQR}.
\end{lem}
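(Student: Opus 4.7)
The plan is to expand the covariances using bilinearity and the i.i.d.\ structure of the $E_i$, convert the resulting sums to Riemann integrals via the harmonic sum asymptotics, and compute the integrals explicitly. Since $\Cov(E_i,E_j)=\delta_{ij}$, and writing $R_{K,n}=K^{-1/2}\sum_{i=1}^K e_{i+1}^K E_i-\sqrt{K}$ (with the analogous expression for $R_{k,n}$), only the diagonal terms $i=j\leq\min(k,K)=k$ survive. This yields
\begin{align*}
\Cov(R_{K,n},R_{k,n}) &= \frac{1}{\sqrt{Kk}}\sum_{i=1}^{k} e_{i+1}^K e_{i+1}^k,\\
\Cov(R_{K,n},P_{k,n}) &= \frac{1}{\sqrt{Kk}}\sum_{i=1}^{k} e_{i+1}^K,
\end{align*}
so the problem reduces to the asymptotics of two deterministic sums.

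Next, I would use the standard harmonic sum asymptotic $H_m=\log m+\gamma_E+O(1/m)$, which gives $e_{i+1}^k=\log(k/i)+O(1/i)$ and $e_{i+1}^K=\log(K/i)+O(1/i)=\log(K/k)+\log(k/i)+O(1/i)$. Since $k/K\to c$, one has $\log(K/k)\to-\log c$. Substituting $v=i/k$ turns the Cesàro averages into Riemann sums. More precisely,
\begin{align*}
\frac{1}{k}\sum_{i=1}^{k} e_{i+1}^K e_{i+1}^k &\longrightarrow \int_0^1\bigl(-\log c-\log v\bigr)\bigl(-\log v\bigr)\,\mathrm{d}v=2-\log c,\\
\frac{1}{k}\sum_{i=1}^{k} e_{i+1}^K &\longrightarrow \int_0^1\bigl(-\log c-\log v\bigr)\,\mathrm{d}v=1-\log c,
\end{align*}
using $\int_0^1(-\log v)\,\mathrm{d}v=1$ and $\int_0^1(\log v)^2\,\mathrm{d}v=2$. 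Multiplying by the prefactor $k/\sqrt{Kk}=\sqrt{k/K}\to\sqrt{c}$ gives
\begin{align*}
\Cov(R_{K,n},R_{k,n})&\to\sqrt{c}\,(2-\log c)=\frac{2c-c\log c}{\sqrt{c}},\\
\Cov(R_{K,n},P_{k,n})&\to\sqrt{c}\,(1-\log c)=\frac{c-c\log c}{\sqrt{c}},
\end{align*}
which are the claimed limits.

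The main technical obstacle is that the integrands $-\log v$ and $(\log v)^2$ are unbounded as $v\downarrow 0$, so the convergence of the Riemann sums is not immediate. I would handle this by a standard truncation argument: split the sum over $i\in\{1,\dots,k\}$ into a low-index part $i\leq\lfloor k\delta\rfloor$ and a high-index part $i>\lfloor k\delta\rfloor$ for small fixed $\delta>0$. On the high-index part, $1/i=O(1/(k\delta))$ uniformly and the Riemann sum converges to the corresponding truncated integral. On the low-index part, the crude bound $e_{i+1}^k\leq H_k=O(\log k)$ together with $\#\{i\leq k\delta\}/k=O(\delta)$ shows that the contribution is $O(\delta(\log k)^2/\sqrt{Kk}\cdot k)=O(\sqrt{c}\,\delta\,(\log k)^2)$; this must be refined, and the cleaner way is to use the sharp asymptotic $e_{i+1}^k=\log(k/i)+O(1/i)$ and the integrability of $(\log v)^2$ on $(0,1)$, so that the truncated integrals converge to the full integrals as $\delta\downarrow 0$ uniformly in $k$. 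Letting $k\to\infty$ first and then $\delta\downarrow 0$ completes the justification, after which the integral evaluations above yield the limits.
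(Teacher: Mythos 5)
Your proposal is correct and follows essentially the same route as the paper: expand the covariance bilinearly so that only the diagonal terms $i=j\le k$ survive, replace $e_{i+1}^k$ and $e_{i+1}^K$ by $\log(k/i)$ and $\log(K/i)\approx-\log c+\log(k/i)$, and pass to the Riemann integrals $\int_0^1(\log v)^2\,\mathrm{d}v=2$ and $\int_0^1(-\log v)\,\mathrm{d}v=1$, with the prefactor $\sqrt{k/K}\to\sqrt{c}$. The only difference is that you explicitly justify the convergence of the Riemann sums despite the integrable singularity of $\log v$ at $0$, a point the paper passes over with an unexplained $\approx$.
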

\begin{proof}
Let $E_1, E_2,\dots$ be i.i.d.\ standard exponential random variables, where $\Cov(E_i,E_j)$ is equal to 1 if $i=j$ and 0 otherwise. Then
\begin{align*}
 \Cov(&R_{K,n},R_{k,n}) = \Cov\left( \sum_{i=1}^k E_i \frac{e_{i+1}^k}{\sqrt{k}} , \sum_{i=1}^K E_i \frac{e_{i+1}^K}{\sqrt{K}} \right)\\
 &= \sum_{i=1}^k \sum_{j=1}^K \frac{e_{i+1}^k}{\sqrt{k}} \frac{e_{j+1}^K}{\sqrt{K}} \Cov(E_i, E_j)= \frac{\sqrt{k}}{\sqrt{K}} \frac{1}{k} \sum_{i=1}^k  \bigg( \frac{1}{k}\sum_{l=i+1}^k \frac{1}{l/k} \bigg) \bigg( \frac{1}{K}\sum_{l=i+1}^K \frac{1}{l/K} \bigg)\\
 &\approx \frac{\sqrt{k}}{\sqrt{K}} \frac{1}{k} \sum_{i=1}^k \bigg( \int_{(i+1)/k}^1  \frac{1}{u}\mathrm{d}u \bigg)  \bigg( \int_{(i+1)/K}^1  \frac{1}{u}\mathrm{d}u \bigg)\\
 & \approx\sqrt{c} \int_0^1 \log(v)^2-\log(v)\log(c) \mathrm{d}v
 = 2\sqrt{c} -\sqrt{c}\log(c),
\end{align*} 
where $c_k\approx c$ again denotes that $c_k\rightarrow c$ as $k\rightarrow\infty$.\\
In the same way we obtain
\begin{align*}
\Cov(&R_{K,n},P_{k,n}) = \sum_{i=1}^k \sum_{j=1}^K \frac{1}{\sqrt{k}} \frac{e_{j+1}^K}{\sqrt{K}} \Cov(E_i, E_j) \\
&\approx  \frac{\sqrt{k}}{\sqrt{K}} \frac{1}{k} \sum_{i=1}^k  \bigg( \int_{(i+1)/K}^1  \frac{1}{u}\mathrm{d}u \bigg) \approx \frac{\sqrt{k}}{\sqrt{K}} \int_{0}^1 \bigg( \int_{cv}^1 \frac{1}{u}\mathrm{d}u \bigg) \mathrm{d}v\\
 &= \sqrt{c} -\sqrt{c}\log(c)
\end{align*}
\end{proof}

\begin{thm}\label{T:averageHill}
Let $\bar{\gamma}_k := \frac{1}{k}\sum_{i=1}^k  \hat{\gamma}_i$ denote the average over Hill estimates. Further let $X_1,\dots,X_n$ be i.i.d.\ random variables with distribution function $F\in \mathrm{DoA}(G_{\gamma})$, $\gamma>0$. If 
\begin{equation}
\underset{t\rightarrow\infty}{\lim} \frac{\frac{U(tx)}{U(t)}-x^{\gamma}}{A(t)} =x^{\gamma}\frac{x^{\rho}-1}{\rho},
\end{equation}
with $U(x):=F^{\leftharpoonup}\left(1-\frac{1}{x}\right)$ holds and $k\rightarrow\infty$ and $k/n\rightarrow0$ as $n\rightarrow\infty$,
\begin{align*}
\sqrt{k} \left(\bar{\gamma}_{k}-\gamma   \right) \overset{\mathcal{D}}{\longrightarrow} \mathcal{N}\left( \frac{\lambda}{(1-\rho)^2}, 2\gamma^2 \right).
\end{align*}
with $\lambda:= \underset{k\rightarrow\infty}{\lim} \sqrt{k}A(n/k)$. 
\end{thm}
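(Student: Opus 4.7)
The plan is to express $\bar{\gamma}_k$ as a combination of statistics whose distributional representations are already available in Lemma \ref{L:distributional_representaions}, then read off the limit. First I would use the elementary identity $\hat{\gamma}_i = i^{-1}\sum_{l=1}^i l\,S_l$, where $S_l := \log X_{(n-l+1,n)} - \log X_{(n-l,n)}$ is the $l$-th log-spacing; this follows from $Y_{(j,i)} = \sum_{l=j}^i S_l$ and swapping the order of summation. Averaging over $i=1,\dots,k$ and again switching sums yields
\begin{equation*}
\bar{\gamma}_k = \frac{1}{k}\sum_{l=1}^k l\,S_l\, e_l^k .
\end{equation*}
A parallel rearrangement shows $\overline{YE} = \hat{\gamma}_k + k^{-1}\sum_{l=1}^k l\,S_l\,e_{l+1}^k$, and subtracting these two expressions together with $e_l^k - e_{l+1}^k = 1/l$ produces the clean identity
\begin{equation*}
\bar{\gamma}_k = \overline{YE} - \hat{\gamma}_k + Y_{(1,k)}/k,
\end{equation*}
where $Y_{(1,k)} = \log X_{(n,n)} - \log X_{(n-k,n)} = \sum_{l=1}^k S_l$.

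Next I would substitute the representations \eqref{E:Hill-dist} and \eqref{E:YE-dist}. Since the two lemma expansions are driven by the same underlying exponential variables, the $\gamma P_{k,n}/\sqrt{k}$ contributions cancel between $\overline{YE}$ and $\hat{\gamma}_k$, leaving $\gamma R_{k,n}/\sqrt{k}$ as the sole leading stochastic term. The deterministic bias coefficients combine via
\begin{equation*}
\frac{2-\rho}{(1-\rho)^2} - \frac{1}{1-\rho} = \frac{1}{(1-\rho)^2},
\end{equation*}
so the asymptotic bias reduces to $A(Y_{(n-k,n)})/(1-\rho)^2$, and by the regular variation of $|A|$ together with $Y_{(n-k,n)}/U(n/k)\Parrow 1$ one obtains $\sqrt{k}\,A(Y_{(n-k,n)}) \Parrow \lambda$. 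The leftover term $Y_{(1,k)}/k$ is $O_p(\log k/k) = o_p(1/\sqrt{k})$, since the distributional representation of the top log-spacings as a sum $\gamma \sum_{l=1}^k E_l^*/l$ with iid standard exponentials gives $Y_{(1,k)} = O_p(\log k)$.

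Finally, Lemma \ref{L:PQR} yields $R_{k,n} \Darrow \mathcal{N}(0,2)$, hence $\gamma R_{k,n}\Darrow\mathcal{N}(0,2\gamma^2)$; combining with the deterministic bias limit and applying Slutsky gives the claimed $\mathcal{N}(\lambda/(1-\rho)^2,\,2\gamma^2)$ limit. The main obstacle is discovering and verifying the identity $\bar{\gamma}_k = \overline{YE} - \hat{\gamma}_k + Y_{(1,k)}/k$: it is this exact relation that simultaneously produces the squared denominator $(1-\rho)^2$ in the bias and the variance inflation from $\gamma^2$ (for $\hat{\gamma}_k$) to $2\gamma^2$ (for $\bar{\gamma}_k$). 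Once the identity is in hand, the remainder is routine bookkeeping built on already-proved results.
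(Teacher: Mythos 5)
Your proposal is correct and takes essentially the same route as the paper's proof: the same pivotal identity $\bar{\gamma}_k = \overline{YE} - \hat{\gamma}_k + Y_{(1,k)}/k$ (the paper obtains it by directly rearranging the double sum of log order statistics, you by an equivalent spacings decomposition), followed by the same substitution of the joint representations from Lemma \ref{L:distributional_representaions}, the same coefficient arithmetic giving $(1-\rho)^{-2}$, the same $O_p(\log k/k)$ bound on the leftover term, and the variance $2\gamma^2$ read off from Lemma \ref{L:PQR}.
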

\begin{proof}
First we have to rewrite the average over the Hill estimator,
\begin{align*}
\bar{\gamma}_k &= \frac{1}{k}\sum_{i=1}^k \hat{\gamma}_k =\frac{1}{k}\sum_{i=1}^k \frac{1}{i}\sum_{j=1}^i \log\left( \frac{X_{(n-j+1,n)}}{X_{(n-i,n)}} \right)\\
&=\frac{1}{k}\sum_{i=1}^k \log X_{(n-i+1,n)}\sum_{j=i}^k \frac{1}{j}  -\frac{1}{k}\sum_{i=1}^k \log X_{(n-i,n)}\\
&= \frac{1}{k}\sum_{i=1}^k \log\left( \frac{X_{(n-i+1,n)}}{X_{(n-k,n)}} \right)\sum_{j=i}^k \frac{1}{j}  -\frac{1}{k}\sum_{i=1}^k \log\left( \frac{X_{(n-i,n)}}{X_{(n-k,n)}} \right)\\
&= \overline{YE}- \hat{\gamma}_k +\frac{1}{k}\log\left( \frac{X_{(n,n)}}{X_{(n-k,n)}} \right),
\end{align*}
where $\overline{YE}$ is defined in Lemma \ref{L:distributional_representaions}.
Following the proof of Lemma \ref{L:distributional_representaions} it holds that the last term above is in distribution equal to
\begin{align*}
\frac{\gamma}{k}\log\left( \frac{P_{(n,n)}}{P_{(n-k,n)}} \right) + \frac{A(P_{(n-k,n)})}{k}\frac{\left( \frac{P_{(n,n)}}{P_{(n-k,n)}} \right)^{\rho}-1}{\rho} + o_p(1)\frac{|A(P_{(n-k,n)})|}{k}\left( \frac{P_{(n,n)}}{P_{(n-k,n)}} \right)^{\rho+\epsilon}.
\end{align*} 
From Corollary 2.2.2 in \cite{deHaan-book} follows that 
$$\frac{k}{n}P_{(n-k,n)}\Parrow 1, \text{ and }\ \frac{P_{(n,n)}}{k P_{(n-k,n)}}\Parrow 1,\text{ as } n\rightarrow\infty.$$ 
Thus, $ \big(\log(X_{(n,n)})-\log(X_{(n-k,n)})\big)/k= O_p(\log(k)/k)+ O_p(A(n/k)/k)$, and by Lemma \ref{L:distributional_representaions} follows
\begin{equation*}
\bar{\gamma}_k \Deq \gamma + \gamma R_{k,n}/\sqrt{k} + \frac{A(n/k)}{(1-\rho)^2} + o_p(A(n/k))+ O_p(\log(k)/k).
\end{equation*}
\end{proof}

\begin{thm}\label{T:upperMean}
Let $\bar{\gamma}_{\mathrm{u}p,K,k} := \frac{1}{K-k+1}\sum_{i=k}^K \hat{\gamma}_i$ be the upper mean and $k$ and $K$ intermediate sequences, i.e.\ $k\rightarrow\infty$ and $k/n\rightarrow0$, as $n\rightarrow\infty$. Further, let $\sqrt{k}A(n/k)\rightarrow\lambda$ and $k/K\rightarrow c$ with $0<c<1$. 
Under the conditions of Theorem \ref{T:averageHill}, it holds that \begin{equation*}
\sqrt{k}(\bar{\gamma}_{\mathrm{up},K,k} -\gamma) \Darrow \mathcal{N}\left(\frac{\lambda}{(1-\rho)^2}\frac{c^{\rho}-c}{1-c}, \frac{2\gamma^2c}{1-c}\Big(1+\frac{c\log(c)}{1-c}\Big) \right).
\end{equation*}
\end{thm}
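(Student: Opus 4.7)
The plan is to reduce the upper mean to a difference of two lower means and then apply Theorem \ref{T:averageHill}. Writing $\sum_{i=k}^{K}\hat{\gamma}_i = K\bar{\gamma}_K - (k-1)\bar{\gamma}_{k-1}$ yields the exact identity
\begin{equation*}
\bar{\gamma}_{\mathrm{up},K,k} - \gamma
= \frac{K(\bar{\gamma}_K - \gamma) - (k-1)(\bar{\gamma}_{k-1} - \gamma)}{K-k+1},
\end{equation*}
so the task reduces to controlling $\bar{\gamma}_K$ and $\bar{\gamma}_{k-1}$ jointly. Because $k/K\to c\in(0,1)$ and $k$ is intermediate, $K$ is intermediate as well, and the stochastic expansion $k(\bar{\gamma}_k-\gamma) = \gamma\sqrt{k}R_{k,n} + kA(n/k)/(1-\rho)^2 + o_p(kA(n/k)) + O_p(\log k)$ from the proof of Theorem \ref{T:averageHill} applies to both indices.

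Substituting these expansions (and using $(k-1)\bar{\gamma}_{k-1}=k\bar{\gamma}_k-\hat{\gamma}_k$, which differs from $k\bar{\gamma}_k$ by $O_p(1)$) splits $\sqrt{k}(\bar{\gamma}_{\mathrm{up},K,k}-\gamma)$ into a deterministic bias part and a centred stochastic part. For the bias, $\sqrt{k}\,(KA(n/K)-kA(n/k))/((K-k+1)(1-\rho)^2)$, the regular variation of $|A|$ with index $\rho$ gives $A(n/K)/A(n/k)\to c^{\rho}$; combined with $\sqrt{k}A(n/k)\to\lambda$ and $(K-k+1)/K\to 1-c$, the limit evaluates to $\lambda(c^{\rho}-c)/((1-\rho)^2(1-c))$, matching the stated mean.

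For the stochastic part $\gamma\sqrt{k}(\sqrt{K}R_{K,n}-\sqrt{k}R_{k,n})/(K-k+1)$, I need joint asymptotic normality of $(R_{k,n},R_{K,n})$: marginals come from Lemma \ref{L:PQR}, and the asymptotic covariance from Lemma \ref{L:Cov_K_k}. A direct variance computation using $k=cK+o(K)$ gives
\begin{equation*}
\mathrm{Var}\bigl(\sqrt{K}R_{K,n}-\sqrt{k}R_{k,n}\bigr) = 2K + 2k - 2\sqrt{kK}(2\sqrt{c}-\sqrt{c}\log c)+o(K) \longrightarrow 2K(1-c+c\log c),
\end{equation*}
and multiplying by $\gamma^2 k/(K-k+1)^2$ yields the announced variance $2\gamma^2 c(1-c+c\log c)/(1-c)^2 = \tfrac{2\gamma^2 c}{1-c}\bigl(1+\tfrac{c\log c}{1-c}\bigr)$. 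Slutsky's lemma then combines the deterministic bias and the Gaussian noise to complete the proof.

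The main obstacle is establishing the joint limit of $(R_{k,n},R_{K,n})$: the marginal CLTs in Lemma \ref{L:PQR} are not enough, but since both variables are built from the same i.i.d. exponentials $E_1,\dots,E_K$ through the deterministic coefficients $e_{i+1}^k$ and $e_{i+1}^K$, a Cramér--Wold reduction to a single Liapounov CLT for $a_1R_{k,n}+a_2R_{K,n}$, entirely analogous to the proof of Lemma \ref{L:PQR}, supplies the joint normality. The only new deterministic sums appearing in that argument are of the form $\sum_{i=1}^k (e_{i+1}^k)^j(e_{i+1}^K)^{m-j}/k$, all of which are controlled by the same integral approximations already used in Lemma \ref{L:Cov_K_k}. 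Tracking the $o_p(A(n/k))$ and $O_p(\log k/k)$ remainders after the scaling factor $\sqrt{k}K/(K-k+1)=O(\sqrt{k})$ is routine since $\sqrt{k}A(n/k)=O(1)$ and $\sqrt{k}\log k/k\to 0$.
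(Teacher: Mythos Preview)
Your proof is correct and follows essentially the same route as the paper: decompose $\bar{\gamma}_{\mathrm{up},K,k}$ as a weighted difference of $\bar{\gamma}_K$ and $\bar{\gamma}_k$, insert the stochastic expansion from the proof of Theorem~\ref{T:averageHill}, and compute the limiting mean via regular variation of $A$ and the limiting variance via Lemmas~\ref{L:PQR} and~\ref{L:Cov_K_k}. If anything, you are slightly more careful than the paper---you use the exact identity with $(k-1)\bar{\gamma}_{k-1}$ rather than $k\bar{\gamma}_k$ and you explicitly flag (and sketch) the Cram\'er--Wold argument needed for joint normality of $(R_{k,n},R_{K,n})$, which the paper tacitly assumes.
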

\begin{proof}
We can write the upper mean as a combination of two averaged Hill estimators and apply Theorem \ref{T:averageHill},
\begin{align*}
\bar{\gamma}_{\mathrm{up},K,k}&= \frac{K}{K-k+1}\bar{\gamma}_K - \frac{k}{K-k+1}\bar{\gamma}_k\\
&\Deq \gamma + \gamma\frac{K}{K-k+1} R_{K,n}/\sqrt{K} - \gamma\frac{k}{K-k+1} R_{k,n}/\sqrt{k} \\
&\qquad + \frac{K}{K-k+1}\frac{A(n/K)}{(1-\rho)^2}- \frac{k}{K-k+1}\frac{A(n/k)}{(1-\rho)^2} + o_p(A(n/K)).
\end{align*}
We approximate $k/K$ by $c$ and obtain
\begin{align*}
\sqrt{k}(\bar{\gamma}_{\mathrm{up},K,k}-\gamma) &\Deq \gamma\frac{\sqrt{c}}{1-c} R_{K,n} - \gamma\frac{c}{(1-c)} R_{k,n} \\
&\qquad + \frac{1}{1-c}\frac{\sqrt{k}A(n/K)}{(1-\rho)^2}- \frac{c}{(1-c)}\frac{\sqrt{k}A(n/k)}{(1-\rho)^2} + o_p(1).
\end{align*}
Now we need the covariance between $R_{n,k}$ and $R_{n,K}$, see Lemma \ref{L:Cov_K_k}, and apply the following property of regular varying functions,
\begin{align*}
\sqrt{k}A(n/K)= \sqrt{k}A(n/k)\frac{A(cn/k)}{A(n/k)}\rightarrow \lambda c^{\rho},\ \text{as } n \rightarrow\infty.
\end{align*}
This leads to
\begin{align*}
&\sqrt{K}(\bar{\gamma}_{\mathrm{up},K,k}-\gamma)\\
&\qquad \Darrow \mathcal{N}\left(\frac{\lambda}{(1-\rho)^2}\frac{c^{\rho}-c}{1-c}, \frac{\gamma^2c}{(1-c)^2}\Big(2+2c-2\sqrt{c}\Big( \frac{2c-c\log(c)}{\sqrt{c}}\Big)\Big) \right).
\end{align*}
\end{proof}

\begin{proof}[Proof of Theorem \ref{T:Bias}]
The bias estimator is defined as $\bar{b}_{\mathrm{up},K,k}= \bar{\gamma}_{\mathrm{up},K,k}-\bar{\gamma}_K$ in equation (\ref{E:bias_est}). Thus, we can utilize the asymptotic normality results for $\bar{\gamma}_k$ and $\bar{\gamma}_{\mathrm{up},K,k}$ in Theorem \ref{T:averageHill} and \ref{T:upperMean}. Following the proofs of these theorems it holds that
\begin{align*}
\sqrt{k}&\, \bar{b}_{\mathrm{up},K,k} \Deq \gamma\frac{k-1}{K-k+1}\frac{\sqrt{k}}{\sqrt{K}} R_{K,n} -\gamma\frac{k}{K-k+1}R_{k,n} \\
& \quad+ \frac{k}{K-k+1}\frac{\sqrt{k}(A(n/K)-A(n/k))}{(1-\rho)^2} + \sqrt{k}\Big(o_p(A(n/K))+ o_p(A(n/k))\Big).
\end{align*}
Here the random variable $R_{k,n}$ is defined in Lemma \ref{L:PQR} and we know that $R_{k,n}$ has a normal limit distribution. With Lemma \ref{L:PQR} and Lemma \ref{L:Cov_K_k} we obtain the following variance,
\begin{align*}
& \mathrm{Var}\bigg(\gamma\frac{k-1}{K-k+1}\frac{\sqrt{k}}{\sqrt{K}} R_{K,n} -\gamma\frac{k}{K-k+1}R_{k,n}  \bigg)\\
&\ \approx \gamma^2\bigg(  \Big( \frac{c\sqrt{c}}{1-c} \Big)^2\Var(R_{K,n})+  \Big( \frac{c}{1-c} \Big)^2\Var(R_{k,n})- 2\frac{c^2\sqrt{c}}{(1-c)^2}\Cov(R_{n,K},R_{n,k})  \bigg)\\
&\ \approx\gamma^2\frac{2c^3+ 2c^2 -4c^3+ 2c^3\log(c)}{(1-c)^2} = \frac{2\gamma^2c^2}{1-c} \bigg( 1+\frac{c\log(c)}{1-c}\bigg),
\end{align*} 
The bias term of the normal limit is
\begin{align*}
\frac{k}{K-k+1}\frac{\sqrt{k}(A(n/K)-A(n/k))}{(1-\rho)^2}  \rightarrow \frac{\lambda}{(1-\rho)^2} \frac{c(c^{\rho}-1)}{1-c},
\end{align*}
as $n\rightarrow\infty$, which follows from the regular variation of $A$ and due to $\sqrt{k}A(n/k)\rightarrow\lambda$. Since 
\begin{align*}
\sqrt{k}\Big(o_p(A(n/K))+ o_p(A(n/k))\Big)=o_p(1),
\end{align*}
the statement of the theorem follows immediately.
\end{proof}

\section*{Acknowledgement}
Support of the DFG RTG 2088 (B4) is gratefully acknowledged. We are grateful to Julien Hambuckers for providing us the dataset of operational losses from UniCredit.

\bibliographystyle{chicagoa}    
\bibliography{bibhd} 

\begin{thebibliography}{}

\bibitem[\protect\citeauthoryear{Bader, Yan, and Zhang}{Bader
  et~al.}{2018}]{Bader}
Bader, B., J.~Yan, and X.~Zhang (2018).
\newblock Automated threshold selection for extreme value analysis via ordered
  goodness-of-fit tests with adjustment for false discovery rate.
\newblock {\em Ann. Appl. Stat.\/}~{\em 12}, 310--329.


\bibitem[\protect\citeauthoryear{Beirlant, Dierckx, Guillou, and
  St\v{a}ric\v{a}}{Beirlant et~al.}{2002}]{Beirlant02}
Beirlant, J., G.~Dierckx, A.~Guillou, and C.~St\v{a}ric\v{a} (2002).
\newblock On exponential representations of log-spacings of extreme order
  statistics.
\newblock {\em Extremes\/}~{\em 5}, 157--180.


\bibitem[\protect\citeauthoryear{Beirlant, Kijko, Reynkens, and
  Einmahl}{Beirlant et~al.}{2018}]{Beirlant-earthquake}
Beirlant, J., A.~Kijko, T.~Reynkens, and J.~H.~J. Einmahl (2018).
\newblock Estimating the maximum possible earthquake magnitude using extreme
  value methodology: the {G}roningen case.
\newblock {\em Nat. Hazards\/}~{\em 169}, 1--23.


\bibitem[\protect\citeauthoryear{Carreau, Naveau, and Neppel}{Carreau
  et~al.}{2017}]{Naveau-rain}
Carreau, J., P.~Naveau, and L.~Neppel (2017).
\newblock Partitioning into hazard subregions for regional peaks-over-threshold
  modeling of heavy precipitation.
\newblock {\em Water Resour. Res.\/}~{\em 53}, 4407--4426.


\bibitem[\protect\citeauthoryear{Chavez-Demoulin, Embrechts, and
  Hofert}{Chavez-Demoulin et~al.}{2016}]{Valerie-loss}
Chavez-Demoulin, V., P.~Embrechts, and M.~Hofert (2016).
\newblock An extreme value approach for modeling operational risk losses
  depending on covariates.
\newblock {\em J. Risk Insur.\/}~{\em 83}, 735--776.


\bibitem[\protect\citeauthoryear{Chung}{Chung}{1974}]{Chung-book}
Chung, K.~L. (1974).
\newblock {\em A Course in Probability Theory}.
\newblock Academic Press.


\bibitem[\protect\citeauthoryear{Clauset, Shalizi, and Newman}{Clauset
  et~al.}{2009}]{Clauset}
Clauset, A., C.~R. Shalizi, and M.~E.~J. Newman (2009).
\newblock Power-law distributions in empirical data.
\newblock {\em SIAM Rev.\/}, 661--703.


\bibitem[\protect\citeauthoryear{Cope, Piche, and Walter}{Cope
  et~al.}{2012}]{Cope}
Cope, E.~W., M.~T. Piche, and J.~S. Walter (2012).
\newblock Macroenvironmental determinants of operational loss serverity.
\newblock {\em J. Bank. Finance\/}~{\em 36}, 1362--1380.


\bibitem[\protect\citeauthoryear{Danielsson, de~Haan, Peng, and
  de~Vries}{Danielsson et~al.}{2001}]{Danielsson}
Danielsson, J., L.~de~Haan, L.~Peng, and C.~G. de~Vries (2001).
\newblock Using a bootstrap method to choose the sample fraction in tail index
  estimation.
\newblock {\em J. Multivariate Anal.\/}~{\em 76}, 226--248.


\bibitem[\protect\citeauthoryear{de~Haan and Ferreira}{de~Haan and
  Ferreira}{2006}]{deHaan-book}
de~Haan, L. and A.~Ferreira (2006).
\newblock {\em Extreme Value Theory - An Introduction}.
\newblock Springer.


\bibitem[\protect\citeauthoryear{de~Haan and Peng}{de~Haan and
  Peng}{1998}]{deHaan-Peng}
de~Haan, L. and L.~Peng (1998).
\newblock Comparison of tail index estimators.
\newblock {\em Stat. Neerl.\/}~{\em 52}, 60--70.


\bibitem[\protect\citeauthoryear{de~Haan and Zhou}{de~Haan and
  Zhou}{2017}]{deHaan-Chen}
de~Haan, L. and C.~Zhou (2017).
\newblock Trends in extreme value indices.
\newblock working paper,
  https://personal.eur.nl/zhou/Research/WP/varygamma.pdf.

\bibitem[\protect\citeauthoryear{Dey and Yan}{Dey and Yan}{2016}]{DeyYan}
Dey, D.~Y. and J.~Yan (2016).
\newblock {\em Extreme value modeling and risk analysis}.
\newblock Taylor and Francis Group.


\bibitem[\protect\citeauthoryear{Drees, Jan{\ss}en, Resnick, and Wang}{Drees
  et~al.}{2018}]{Drees_Clauset}
Drees, H., A.~Jan{\ss}en, S.~I. Resnick, and T.~Wang (2018).
\newblock On a minimum distance procedure for threshold selection in tail
  analysis.
\newblock preprint, arXiv:1811.06433.

\bibitem[\protect\citeauthoryear{Drees and Kaufmann}{Drees and
  Kaufmann}{1998}]{DreesK}
Drees, H. and E.~Kaufmann (1998).
\newblock Selecting the optimal sample fraction in univariate extreme value
  estimation.
\newblock {\em Stoch. Proc. Appl.\/}~{\em 75}, 149--172.


\bibitem[\protect\citeauthoryear{Drees, Resnick, and de~Haan}{Drees
  et~al.}{2000}]{DreesHill}
Drees, H., S.~Resnick, and L.~de~Haan (2000).
\newblock How to make a {H}ill plot.
\newblock {\em Ann. Statist.\/}~{\em 28}, 254--274.


\bibitem[\protect\citeauthoryear{DuMouchel}{DuMouchel}{1983}]{DuMouchel}
DuMouchel, W.~H. (1983).
\newblock Estimating the stable index $\alpha$ in order to measure tail
  thickness: A critique.
\newblock {\em Ann. Statist.\/}~{\em 11}, 1019--1031.


\bibitem[\protect\citeauthoryear{Einmahl, de~Haan, and Zhou}{Einmahl
  et~al.}{2016}]{Einmahl16}
Einmahl, J. H.~J., L.~de~Haan, and C.~Zhou (2016).
\newblock Statistics of heteroscedastic extremes.
\newblock {\em J. R. Statist. Soc B\/}~{\em 78}, 31--51.


\bibitem[\protect\citeauthoryear{Ferreira, de~Haan, and Peng}{Ferreira
  et~al.}{2003}]{Ferreira-quantiles}
Ferreira, A., L.~de~Haan, and L.~Peng (2003).
\newblock On optimising the estimation of high quantiles of a probability
  distribution.
\newblock {\em Statistics\/}~{\em 37}, 401--434.


\bibitem[\protect\citeauthoryear{Goegebeur, Beirlant, and de~Wet}{Goegebeur
  et~al.}{2008}]{Goegebeur}
Goegebeur, Y., J.~Beirlant, and T.~de~Wet (2008).
\newblock Linking {P}areto-tail kernel goodness-of-fit statistics with tail
  index at optimal threshold and second order estimation.
\newblock {\em REVSTAT - Stat. J.\/}~{\em 6}, 51--69.


\bibitem[\protect\citeauthoryear{Gomes, Martins, and Neves}{Gomes
  et~al.}{2000}]{Gomes01}
Gomes, M.~I., M.~J.~a. Martins, and M.~Neves (2000).
\newblock Alternatives to a semi-parametric estimator of parameters of rare
  events - the {J}ackknife methodology.
\newblock {\em Extremes\/}~{\em 3}, 207--229.


\bibitem[\protect\citeauthoryear{Gonzalo and Olmo}{Gonzalo and
  Olmo}{2004}]{GonzaloOlmo}
Gonzalo, J. and J.~Olmo (2004).
\newblock Which extreme values are really extreme?
\newblock {\em J. Financ. Economet.\/}~{\em 2}, 349--369.


\bibitem[\protect\citeauthoryear{Guillou and Hall}{Guillou and
  Hall}{2001}]{Guillou}
Guillou, A. and P.~Hall (2001).
\newblock A diagnostic for selecting the threshold in extreme value analysis.
\newblock {\em J. R. Statist. Soc. B\/}~{\em 63}, 293--305.


\bibitem[\protect\citeauthoryear{Hall and Weissman}{Hall and
  Weissman}{1997}]{HallWeissman}
Hall, P. and I.~Weissman (1997).
\newblock On the estimation of extreme tail probebilities.
\newblock {\em Ann. Statist.\/}~{\em 25}, 1311--1326.


\bibitem[\protect\citeauthoryear{Hambuckers, Groll, and Kneib}{Hambuckers
  et~al.}{2018}]{Hambuckers}
Hambuckers, J., A.~Groll, and T.~Kneib (2018).
\newblock Understanding the economic determinants of the serverity of
  operational losses: a regularized generalized {P}areto regression approach.
\newblock {\em J. Appl. Economet.\/}~{\em 33}, 898--935.


\bibitem[\protect\citeauthoryear{Hill}{Hill}{1975}]{Hill}
Hill, B.~M. (1975).
\newblock A simple general approach to inference about the tail of a
  distribution.
\newblock {\em Ann. Statist.\/}~{\em 3}, 1163--1174.


\bibitem[\protect\citeauthoryear{Kratz and Resnick}{Kratz and
  Resnick}{1996}]{KratzResnick}
Kratz, M.~F. and S.~I. Resnick (1996).
\newblock The qq-estimator and heavy tails.
\newblock {\em Comm. Statist. Stochastic Models\/}~{\em 12}, 699--724.


\bibitem[\protect\citeauthoryear{Krivobokova and Kauermann}{Krivobokova and
  Kauermann}{2007}]{Krivobokova07}
Krivobokova, T. and G.~Kauermann (2007).
\newblock A note on penalized spline smoothing with correlated errors.
\newblock {\em J. Am. Statist. Ass.\/}~{\em 102}, 1328--1337.


\bibitem[\protect\citeauthoryear{Lee, Mammen, and Park}{Lee
  et~al.}{2010}]{Lee10}
Lee, Y.~K., E.~Mammen, and B.~U. Park (2010).
\newblock Bandwidth selection for kernel regression with correlated errors.
\newblock {\em Statistics\/}~{\em 44}, 327--340.


\bibitem[\protect\citeauthoryear{Moscadelli}{Moscadelli}{2004}]{Moscadelli}
Moscadelli, M. (2004).
\newblock The modelling of operational risk: experience with the analysis of
  the data collected by the basel committee.
\newblock Technical report, Bank of Italy -- Banking and Finance Department.

\bibitem[\protect\citeauthoryear{Neves and Fraga~Alves}{Neves and
  Fraga~Alves}{2004}]{Neves-RT}
Neves, C. and M.~I. Fraga~Alves (2004).
\newblock Reiss and {T}homas' automatic selection of the number of extremes.
\newblock {\em Comput. Stat. Data Anal.\/}~{\em 47}, 689--704.


\bibitem[\protect\citeauthoryear{Opsomer, Wang, and Yang}{Opsomer
  et~al.}{2001}]{Opsomer}
Opsomer, J., Y.~Wang, and Y.~Yang (2001).
\newblock Nonparametric regression with correlated errors.
\newblock {\em Statist. Sci.\/}~{\em 16}, 134--153.


\bibitem[\protect\citeauthoryear{Pickands}{Pickands}{1975}]{Pickands}
Pickands, J. (1975).
\newblock Statistical inference using extreme order statistics.
\newblock {\em Ann. Statist.\/}~{\em 3}, 119--131.


\bibitem[\protect\citeauthoryear{Reiss and Thomas}{Reiss and
  Thomas}{2007}]{Reiss-Thomas}
Reiss, R.-D. and M.~Thomas (2007).
\newblock {\em Statistical Analysis of Extreme Values}.
\newblock Birkh\"auser Verlag.


\bibitem[\protect\citeauthoryear{Resnick and St\v{a}ric\v{a}}{Resnick and
  St\v{a}ric\v{a}}{1997}]{Resnick97}
Resnick, S. and C.~St\v{a}ric\v{a} (1997).
\newblock Smoothing the {H}ill estimator.
\newblock {\em Adv. Appl. Probab.\/}~{\em 29}, 271--293.


\bibitem[\protect\citeauthoryear{Scarrott and MacDonald}{Scarrott and
  MacDonald}{2012}]{review}
Scarrott, C. and A.~MacDonald (2012).
\newblock A review of extreme value threshold estimation and uncertainty
  quantification.
\newblock {\em REVSTAT - Stat. J.\/}~{\em 10}, 33--60.


\bibitem[\protect\citeauthoryear{Serra, Krivobokova, and Rosales}{Serra
  et~al.}{2018}]{SerraKR}
Serra, P., T.~Krivobokova, and F.~Rosales (2018).
\newblock Adaptive non-parametric estimation of mean and autocovariance in
  regression with dependent errors.
\newblock preprint, arXiv:1812.06948.

\end{thebibliography}

\end{document}